\newcommand{\PP}{\mathbb{P}}
\newcommand{\E}{\mathbb{E}}
\newcommand{\OO}{\text{O}}
\newtheorem{lemma}{Lemma}
\newtheorem{example}{Example}
\newtheorem{theorem}{Theorem}
\newtheorem{definition}{Definition}
\title{Correlated Age-of-Information Bandits}
\author{Ishank Juneja, Santosh Fatale, and Sharayu Moharir \\ Department
		of Electrical Engineering, Indian Institute of Technology Bombay} 
\newfont{\mycrnotice}{ptmr8t at 7pt}
\newfont{\myconfname}{ptmri8t at 7pt}
\begin{document}
\maketitle
\begin{abstract}
%We consider a system composed of a single sensor node tracking a time varying quantity. In every discretized time slot, the node attempts to communicate an update to a central monitoring station through one of multiple communication channels. To measure the efficacy of this system, the metric of Age-of-Information (AoI) is used. AoI is the time elapsed since the last successful update was received by the monitoring station. In the model, channel statistics are assumed to be unknown but stationary across time. Previously, the Multi-Armed Bandit (MAB) framework has been applied to the problem of cumulative AoI minimization in this problem setting.
%Under the traditional MAB formulation, the channel statistics for the various arms are taken to be independent of each other. In this work we make use of a modified MAB framework that is capable of capturing correlation between the performance of the available channels. Scheduling policies that can exploit correlation between the performance of various communication channels have been used to minimize cumulative AoI. These correlation sensitive scheduling policies are augmented versions of the popular UCB and Thompson Sampling algorithms. Through simulations we demonstrate improved performance on AoI regret by the augmented policies as compared to UCB and Thompson Sampling for problem instances with inter channel correlation. 
We consider a system composed of a sensor node tracking a time varying quantity. In every discretized time slot, the node attempts to send an update to a central monitoring station through one of $K$ communication channels. We consider the setting where channel realizations are correlated across channels. This is motivated by mmWave based 5G systems where line-of-sight which is critical for successful communication is common across all frequency channels while the effect of other factors like humidity is frequency dependent.

The metric of interest is the Age-of-Information (AoI) which is a measure of the freshness of the data available at the monitoring station. In the setting where channel statistics are unknown but stationary across time and correlated across channels, the algorithmic challenge is to determine which channel to use in each time-slot for communication. We model the problem as a Multi-Armed bandit (MAB) with channels as arms. We characterize the fundamental limits on the performance of any policy. In addition, via analysis and simulations, we characterize the performance of  variants of the UCB and Thompson Sampling policies that exploit correlation.  
\end{abstract}

\section{Introduction}
\label{sec:intro}
% Introduction to the introduction - called background
Future communication technologies including 5G are likely to use the millimeter band (30GHz to 300GHz) for communication. The available bandwidth is partitioned into frequency channels for communication. Factors such as frequency dependent atmospheric attenuation affect propagation in the millimeter band. In addition, the availability of a line-of-sight path between the source and receiver is critical for successful communication in this band \cite{doi:10.1002/9780470889886.ch1}. Since the existence of a line-of-sight path is frequency agnostic, channel realizations across different frequency channels at a given time are correlated. 

Age of Information (AoI), introduced in \cite{Kaul2012RealTime}, is a freshness of data metric that measures the time elapsed since the most recent successful update sent from a source was received at the intended destination. For time-critical applications like self-driving cars, smart homes and other up and coming IoT applications, it is imperative that the data used by the control unit to make decisions is as recent as possible. In these cases, AoI is a suitable performance metric.

The system we study builds on the setting studied in \cite{fatale2020regret} and consists of a single source node tracking a time-varying quantity. The source attempts communicating an update to a central monitoring station. At every discretized time step $t$, an update is sent through one among $K$ available channels. Each channel has a certain probability of success which is assumed to remain static across the period of operation. The channel statistics, that is the probability of success or failure of communication for a certain channel, are not known to the scheduler. However, it is known that the successes and the failures across the $K$ channels are correlated with one another through an underlying stochastic state $X$. Depending on the state in which $X$ finds itself, certain channels are successful whereas others are not (Figure \ref{fig:system_model}). Thereby the model accounts for correlation between the performances of channels. 
%This single source model is a building block towards real IoT-type systems with multiple sensors.

%AoI at the monitoring station increments by one if an update sent to it is unsuccessful, whereas if at any time step an update is successful, AoI simply resets to one. The goal of the scheduler is to play a channel selection algorithm (also called policy) that minimizes cumulative AoI, over a finite horizon $T$. Similar to the standard Multi-Armed Bandit (MAB) problem, minimizing cumulative AoI entails a trade-off between exploiting the channel that has the best empirical performance up till the current time, and exploring other channels that could potentially have a higher probability of success. The framework used for modelling the correlation between the performance of channels is the \textit{Correlated Multi-Armed Bandit} framework introduced in \cite{gupta2019multiarmed}. The difference between expected cumulative AoI under the chosen scheduling policy, and under an oracle's optimal strategy of choosing the best channel $k^{*}$ at all time steps, is called \textit{AoI Regret}. Using variants of well known MAB policies, we show significant improvement on AoI regret for problem instances where the performance across channels is correlated.

%By definition, the AoI at the monitoring station increments by one if an update sent to it is unsuccessful, and resets to one otherwise. 

The algorithmic challenge is to determine which channel to use for communication in each time step in order to minimize cumulative AoI over a finite horizon $T$. The difference between expected cumulative AoI under the chosen scheduling policy, and under an oracle's optimal strategy of choosing the best channel $k^{*}$ at all time steps, is called \textit{AoI Regret}. We model the correlation between the performance of channels using the \textit{Correlated Multi-Armed Bandit} framework introduced by \cite{gupta2019multiarmed}. For the AoI metric, scheduling decisions taken at any time step have a downstream effect across all future time slots. Hence, a new analysis for the AoI regret metric is needed to tackle problem instances drawn from the Correlated Bandit framework.  
\begin{figure}[t]
\centering
\includegraphics[scale=0.46]{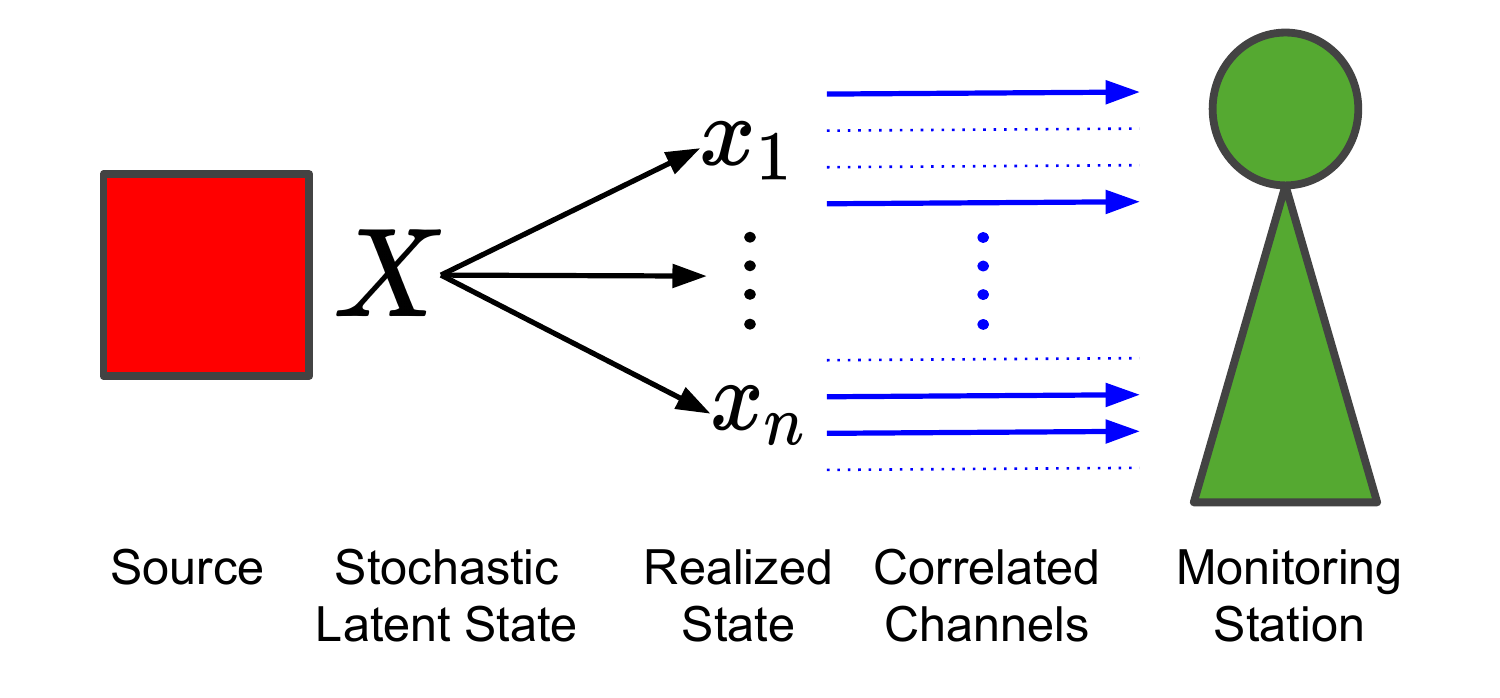}
\caption{The source node is attempting to communicate with the monitoring station. Depending on the state in which $X$ finds itself, only certain channels work.}
\label{fig:system_model}
\end{figure}
\subsection{Our Contributions}
\label{subsec:contrib}
%Lower Bound
\textit{Lower bound on AoI regret:} We show a lower bound of $\Omega(\log T)$ on AoI regret for instances that have at least one \textit{strictly competitive} arm (formally defined in Section \ref{sec:setting}).

%Upper Bounds
\textit{Performance of variants of UCB and Thompson Sampling:} We show that the AoI regret for Correlated-UCB (CUCB) and Correlated-Thompson Sampling (CTS) (proposed in \cite{gupta2019multiarmed}) is $\OO(C \log T) + \OO(1)$. Here $C$ is the number of sub-optimal \textit{competitive arms} (formally defined in Section \ref{sec:setting}). 
%The upper bounds on AoI regret are order-optimal if at least one competitive arm is strictly competitive or if none of the arms are competitive. That is, the upper and lower bounds are order-wise equal to each other for problem instances satisfying certain conditions.

%Experimental Evaluation
\textit{Empirical validation:} Through simulations, we compare the performance of UCB, Thompson Sampling, CUCB, CTS, and their \textit{AoI-Aware} variants proposed in \cite{fatale2020regret}.

\subsection{Related Work}
% This sub-section should be approx 2/3rd a column long.
% Para 1: MABs and their application. Queue bandit setting example
Multi-Armed Bandits (MABs) are a sequential decision making framework where at every time step $t$, a choice has to be made between $K$ possible \textit{bandit arms} with unknown statistics. 
%Based on the choice of arm $k_t$, a certain reward outcome $r_t$, is obtained at every time step $t$. The reward outcome is said to be $1$ in the case of a success and 0 in case of failure. The probability of success, or in other words, the expected reward for any of the arms, is not known to the scheduling policy. Hence, the challenge for the scheduler is to come up with a policy that balances between exploitation of arms that have empirically performed well up to $t$ and exploration of arms that have not been sampled often. The trade-off between exploration and exploitation has been studied extensively starting from the work of 
UCB \cite{lai1985asymptotically} and Thompson Sampling \cite{thompson1933likelihood} are two widely studied algorithms for the MAB problem. In this work, we study variants of these policies more suited for our setting.

%In one of the most significant contributions in this space, \cite{lai1985asymptotically} propose the UCB algorithm. In this work we use policies that derive from UCB and Thompson Sampling \cite{thompson1933likelihood}. 
%For our sensor-station system, bandit arms correspond to channels and the reward outcomes - $1$ or $0$, represent success and failure of transmission respectively.
%An application of MABs particularly relevant to this work is \cite{krishnasamy2016learning}, where the authors tackle queue scheduling using Multi-Armed bandits.\\
% Para 2: Related work in terms of correlated bandits. Also mention others like contexual and structured bandits

Recently, variants of the traditional MAB framework that are capable of incorporating additional structure into the decision making problem have been introduced. In addition to observing rewards, the \textit{contexual bandit} framework \cite{zhou2015survey}, learns a mapping between a context vector $\theta$ and the best arm $k^{*}$. Another example is the structured bandit framework \cite{structured}, in which the mean rewards for all arms as a function of the context $\theta$ are known but $\theta$ itself is hidden. The Correlated Multi-Armed Bandit framework of \cite{gupta2019multiarmed} is a variant of the MAB problem that presents the scheduler with arms whose rewards are not independent of each another. That is, sampling arm $k$ can reveal information about the rewards we can expect from another arm $\ell$. In addition to observing the rewards obtained from sampling an arm $k$, scheduling algorithms cognizant of correlation can track additional side information to identify some arms as sub-optimal without sampling them often, thereby reducing the accumulation of regret.

% Para 3: The age of inforation metric. An AoI survey paper and the AoI bandit paper
AoI or Age-of-Information is measure of the freshness of data available at the central monitoring station. AoI has been the focus of a variety of work, and \cite{kosta2017age} can be referred to for a comprehensive survey. Previously, a large focus of the work on AoI has been on problems where channel statistics are known \cite{tripathi2017age, jhunjhunwala2018age, kadota2018optimizing, Sombabu:2018:AAS:3241539.3267734}. In our work, we take channel statistics to be unknown and apply Multi-Armed Bandits to the scheduling problem, which is the approach taken by \cite{fatale2020regret} for minimizing AoI regret. Building upon the work of \cite{fatale2020regret}, we drop the assumption of independence between channels and instead take that they follow the Correlated Multi-Armed Bandit model.

\color{black}

\section{Setting}
\label{sec:setting}
%Justification for using the said correlation model - in particular the choice of sticking with the discrete random variable model rather than the latest version of the model (continuous in nature) present in the more recent paper
%Talk about how AoI would work in the correlated setting
%DO NOT talk about traditional multi armed bandits or the traditional regret definition
%Mention how correlation along time and between arms is different
%AoI aware policies target correlation across time. prevent the cascading across time
%
\subsection{Correlated Bandit Model}
\begin{figure}[t]
	\centerline{\includegraphics[scale=0.46]{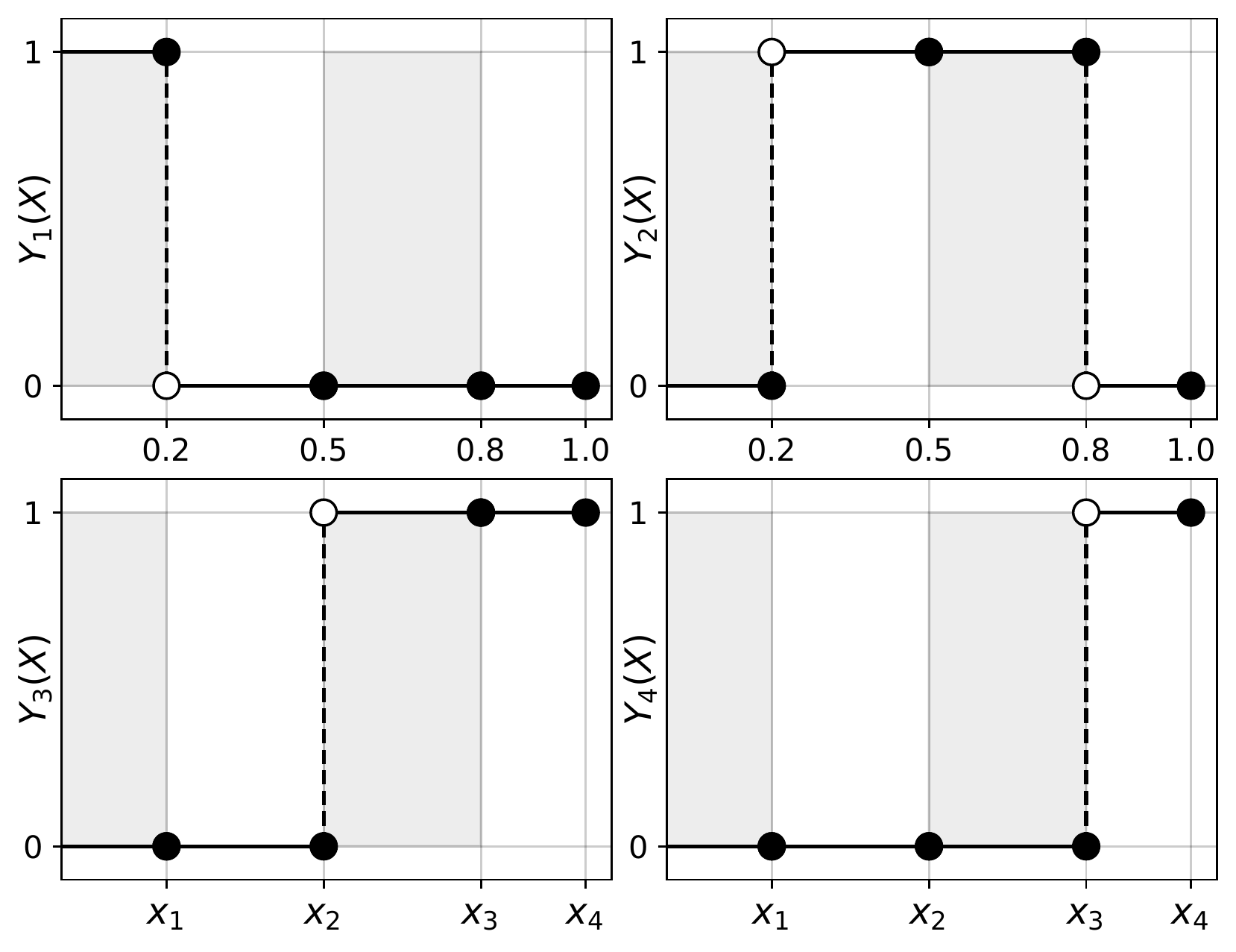}}
	\caption{Bandit instance $\mathrm{I}_1$} 
%	with $X$ such that, $\mathbb{P}\{X = x_1\} = 0.2$, $\mathbb{P}\{X = x_2\} = 0.3$, $\mathbb{P}\{X = x_3\} = 0.3$ and $\mathbb{P}\{X = x_4\} = 0.2$.} 
	\label{fig:instance_i1}
\end{figure}
\begin{figure}[t]
	\centerline{\includegraphics[scale=0.46]{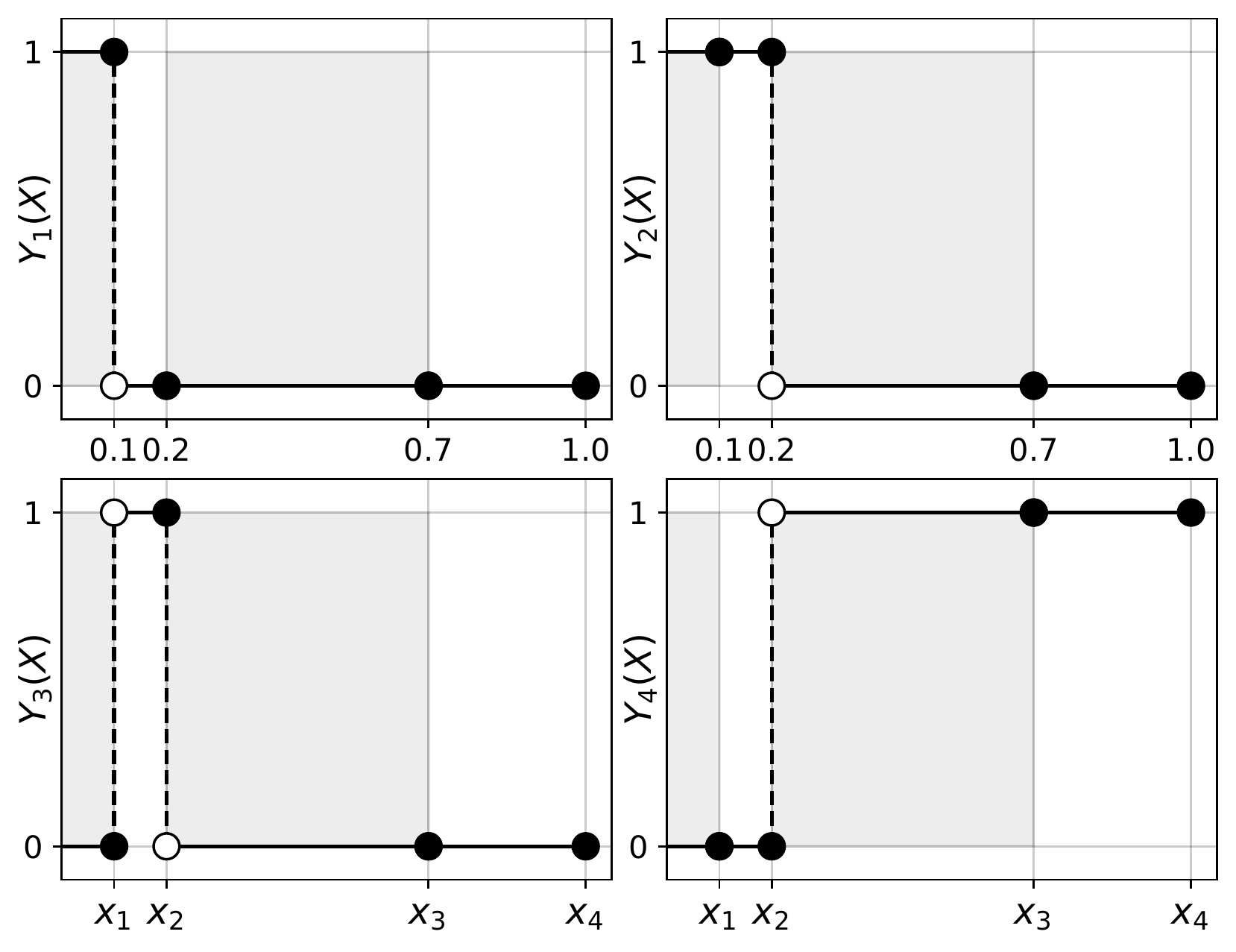}}
	\caption{Bandit instance $\mathrm{I}_2$} 
%	with $X$ such that, $\mathbb{P}\{X = x_1\} = 0.1$, $\mathbb{P}\{X = x_2\} = 0.1$, $\mathbb{P}\{X = x_3\} = 0.5$ and $\mathbb{P}\{X = x_4\} = 0.3$.}
	\label{fig:instance_i2}
\end{figure}
For a system with $K$ communication channels we construct a Correlated Bandit instance with the same number of arms. The random variable $X$ captures the underlying state that determines the rewards for the correlated arms. The reward obtained on playing arm $k$ is denoted by a Bernoulli random variable $Y_k(X)$, where $Y_k$ is a known deterministic function. 
%For each arm $k$, there is a known deterministic function $Y_k$. The reward, that is, the random variable determining successes and failures for arm $k$, is $Y_k(X)$. 
We define $\mu_k = \mathbb{E}_X [Y_k (X)]$. The optimal arm is denoted by $k^{*}$, and has mean $\mu^{*}$. The difference between the largest mean and the mean of a sub-optimal arm is called the sub-optimality gap and is given by $\Delta_k = \mu^{*} - \mu_{k}$.

Since the distribution of $X$ is unknown, the means $\mu_1, \mu_2, \ldots, \mu_K$ are also not known. MAB algorithms such as UCB empirically estimate the means $\hat{\mu}_k$ for use in decision making. Under the Correlated Bandit model, rewards for all the arms are functions of the same random variable $X$. Hence, we can infer the possible rewards that an arm $\ell$ could have returned if we had chosen arm $\ell$ instead of arm $k$. To exploit this, \cite{gupta2019multiarmed} introduces the notion of \textit{pseudo rewards}.  
\begin{definition}[Expected pseudo reward and pseudo gap]
	Pseudo reward for arm $\ell$ with respect to arm $k$ is given by,
	\begin{align}
		s_{\ell, k} (r) &= \sup_{x: Y_k(x) = r} Y_{\ell} (x).
	\end{align}
	Expected pseudo reward in turn is defined as,
	\begin{align}
	{\phi}_{\ell, k} &= \mathbb{E}_X [\,s_{\ell, k} (Y_{k} (X))\,].
	\end{align}
	The pseudo gap is defined as $\tilde{\Delta}_{\ell, k^{*}} = \mu^{*} - \phi_{\ell, k^{*}}$.
	\label{def:pseudo}
\end{definition}
We say arm $k$ is \textit{competitive} if $\tilde{\Delta}_{\ell, k^{*}} \leq 0$ and \textit{strictly competitive} if the inequality is strict. If $\tilde{\Delta}_{\ell, k^{*}} > 0$, we call arm $k$ non-competitive. We use $C$ to denote the number of competitive arms excluding arm $k^{*}$. 
%Since the distribution of $X$ is unknown, the policy has to estimate the expected pseudo reward empirically. 
%This is done in Section \ref{sec:mainResults} by Algorithm \ref{algo:CUCB} (CUCB) and Algorithm \ref{algo:CTS} (CTS) using variable $\hat{\phi}_{\ell, k}$.

\begin{example}
Two examples of Correlated Bandit instances $\mathrm{I}_1$ and $\mathrm{I}_2$ are shown in Figures \ref{fig:instance_i1} and \ref{fig:instance_i2} respectively. Both instances have $K=4$ arms and have $X$ as a discrete random variable taking values in the abstract alphabet $\{x_1, x_2, x_3, x_4\}$. In the figures, the length of the interval corresponding to $x_i$ is equal to $\mathbb{P}\{X = x_i\}$. Consider $\mathrm{I}_1$, in it, the mean reward for the bandit arms is given by, $\mu_1 = 1 \times 0.2$, $\mu_2 = 1\times0.3 + 1\times0.3$, $\mu_3 = 1\times0.3 + 1\times0.2$ and $\mu_4 = 1\times0.3$. Hence arm 2 is optimal with $\mu^{*} = 0.6$. The expected pseudo rewards can be computed using Definition \ref{def:pseudo} to obtain ${\phi}_{1, 2} = 0.4$, ${\phi}_{3, 2} = 1.0$ and ${\phi}_{4, 2} = 0.4$. Hence, for instance $\mathrm{I}_1$ arm 2 is optimal and arm 3 is the only competitive sub-optimal arm. Therefore $C=1$ for instance $\mathrm{I}_1$. Performing a similar analysis on $\mathrm{I}_2$, we find that arm 4 is optimal and no other arm is competitive. In other words $C=0$ for $\mathrm{I}_2$.  
\end{example}
%In a regular MAB instance There are $K$ communication channels available to the sensing node to communicate an update to the the monitoring station. The performance of these channels is modelled to be correlated to one another through the Correlated Multi Armed Bandit model of \cite{gupta2019multiarmed}. The source of the correlation is a shared underlying random state $X$. Depending on the realized state of random variable $X$, certain channels are active while others are blocked or ineffective. The dependence of the channels on the random state $X$ is captured through so-called reward functions - $Y_1, Y_2, \ldots, Y_K$ which are assumed to be known a priori. Each arm corresponds to a channel with unknown channel statistics. To be precise, the reward outcome (either 0 or 1) of each channel $k$, is a Bernoulli random variable - $Y_k (X)$. Since the distribution of random variable $X$ is unknown, the values $\mu_k = \mathbb{E}[Y_k(X)]$ are also unknown to the scheduling policy.  

\subsection{The AoI Regret Metric}
Here, the notions of Age-of-Information (AoI) and AoI regret, that were earlier described informally, are made precise.
\begin{definition}[Age-of-Information (AoI)]
At the start of time slot $t$, let $a(t)$ denote the AoI at the central monitoring station and let $u(t)$ denote the time index at which the recent most successful update was received by the monitoring station. Then,
$
	a(t) = t- u(t).
$
Alternatively, %$a(t)$ evolves as follows:
\begin{align*}
a(t) = 
\begin{cases}
1 & \text{if the update in $t-1$ succeeds} \\
a(t-1) + 1 & \text{otherwise.}
\end{cases}
\end{align*}    
\end{definition}	
Under a given scheduling policy $\rho$, let $a_{\rho}(t)$ denote the AoI in time slot $t$. Further, let $a^{*}(t)$ be the AoI under an oracle's policy that uses the optimal arm $k^{*}$ in all time slots. The AoI regret at time $T$ is the cumulative difference in expected AoI for the two policies from time-slots $1$ to $T$.  
\begin{definition}[Age-of-Information Regret (AoI Regret)]
AoI regret for a policy $\rho$, over $T$ slots is given by,
\begin{align}
R_{\rho}(T) = 
\sum_{t=1}^{T} \mathbb{E}[a_{\rho}(t) - a^{*}(t)] = \sum_{t=1}^{T} \mathbb{E}[a_{\rho}(t)] - \frac{T}{\mu^{*}},
\label{eq:aoi_regret_def}
\end{align}
where \eqref{eq:aoi_regret_def} follows from the expectation of a geometric random variable with parameter $\mu^{*}$. 
\end{definition}
%To utilize results from \cite{fatale2020regret} we assume the following, 
%\begin{assumption}[Scheduling Problem Initial Conditions]
%	We assume that the system starts operating at $t = -\infty$, and that any policy only starts making decisions at $t=1$ with no access to information from time slots $t \leq 0$. 
%\end{assumption}
\color{black}

\section{Main Results and Discussion}
\label{sec:mainResults}
%In the above result, $\Omega(\log T)$ depends on whether we are treating the quantity $d_{\max}$ as a single quantity or not.
In this section we present our main contributions and their implications. First, we provide a lower bound on AoI regret for a certain class of policies, then we examine an upper bound on AoI regret for two policies, namely CUCB and CTS (proposed in \cite{gupta2019multiarmed}). Lastly, we derive the conditions under which the upper and lower bounds on AoI regret are order-wise equal.
%to one another. 

\subsection{Lower Bound for Correlated Bandit Instances}
First, we define the class of $\alpha$-consistent policies and then in Theorem \ref{thm:lower_bound} provide a lower bound on the AoI regret achievable by any policy $\rho$ belonging to this class.
\begin{definition}[$\alpha$-consistent policies \cite{lai1985asymptotically}]
	Let $k_s$ denote the index of the channel scheduled in time-slot $s$. The index $k^{*}$ denotes the index of the optimal channel. A scheduling policy is called $\alpha$-consistent, for a constant $\alpha \in (0,1)$, if there exists an instance dependent constant $M$ such that,
	\begin{align}
		\mathbb{E} \Big[ \sum_{s=1}^{t} \mathds{1}\{k_s = k\} \Big] \leq M t^{\alpha}, \, \, \forall \, k \neq k^{*}.
	\end{align} 
	\label{def:alpha_consistent}
\end{definition}

\begin{theorem}[Lower bound on AoI regret]
	If a bandit instance $\mathrm{I}$ has at least one competitive arm $k$ with $\tilde{\Delta}_{k, k^{*}} < 0$, then for any $\alpha$-consistent policy $\rho$, we have,
	\begin{align*}
	R_{\rho}(T) &\geq  
	\max_{k \in \mathcal{C'}} \frac{\Delta_k}{\mathrm{D}(P_k, P'_k)} \frac{ (1 - \alpha)\log{T} - \log{(4M)} }{\mu^{*}}.
	\end{align*}
	Otherwise, if $\tilde{\Delta}_{k, k^{*}} \geq 0 \, \, \forall \, k \in [K]$,
$
	R_{\rho}(T) \geq  0.
$
	\label{thm:lower_bound}
\end{theorem}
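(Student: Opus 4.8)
The plan is to follow the classical Lai--Robbins change-of-measure strategy, adapted in two ways: to the AoI objective rather than cumulative reward, and to the correlated-bandit structure, in which only \emph{strictly competitive} arms force exploration. The degenerate case is immediate: regret is always non-negative because the oracle that plays $k^{*}$ in every slot minimizes expected cumulative AoI. Since $\mu^{*}=\max_k \mu_k$, a coupling of the success indicators shows that the AoI trajectory under any policy first-order stochastically dominates the trajectory under always-$k^{*}$, so $\E[a_\rho(t)] \ge \E[a^{*}(t)]$ termwise and $R_\rho(T)\ge 0$. This disposes of the case $\tilde{\Delta}_{k,k^{*}}\ge 0$ for all $k$, and the remainder concerns a fixed strictly competitive arm $k\in\mathcal{C'}$.

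First I would reduce AoI regret to the expected number of sub-optimal pulls. Writing $N_k(T)=\E\big[\sum_{s=1}^{T}\Indi\{k_s=k\}\big]$, the expected number of successful updates under $\rho$ is $\sum_j \mu_j N_j(T)$, so the expected \emph{deficit} in successful updates relative to the oracle equals $\mu^{*}T-\sum_j\mu_j N_j(T)=\sum_{j\ne k^{*}}\Delta_j N_j(T)$. Because each missed update postpones the next reset of the AoI sawtooth, and because the downstream growth of age caused by a missed update can only \emph{increase} regret, one obtains a clean one-sided bound
\begin{align*}
R_\rho(T)\;\ge\;\frac{1}{\mu^{*}}\sum_{j\ne k^{*}}\Delta_j N_j(T)\;\ge\;\frac{\Delta_k}{\mu^{*}}\,N_k(T),
\end{align*}
keeping only the term for the chosen arm $k$. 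This step parallels the AoI analysis of \cite{fatale2020regret}; the factor $1/\mu^{*}$ arises because the oracle resets the age on average once every $1/\mu^{*}$ slots, so each lost success is charged that many slots of excess age. It thus suffices to lower bound $N_k(T)$.

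The crux, and the place where the correlated model departs from the classical argument, is the construction of a confusing instance. I would build an alternative law $P'$ of the hidden state $X$ under which arm $k$ becomes optimal, while the marginal reward distribution of \emph{every other} arm---in particular $k^{*}$---is left unchanged, so that only pulls of arm $k$ can statistically distinguish $P$ from $P'$. The strict competitiveness $\phi_{k,k^{*}}>\mu^{*}$ is exactly the condition that makes this possible: by Definition \ref{def:pseudo}, $\phi_{k,k^{*}}$ is the largest mean arm $k$ can attain consistently with the observations generated by $k^{*}$, so when it exceeds $\mu^{*}$ there is enough probability mass to redistribute over the states $x$ to raise $\mu_k$ above $\mu^{*}$ without perturbing the law of $Y_{k^{*}}(X)$ or of the other arms. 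The main obstacle is verifying that such a redistribution exists within the correlated model and yields a well-defined, finite per-pull divergence $\mathrm{D}(P_k,P'_k)$ between the laws of $Y_k$ under the two instances; for a non-competitive arm no such $P'$ exists, which is precisely why the bound is vacuous there.

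Finally I would invoke the standard change-of-measure inequality. Since the two instances differ only through arm $k$, Wald's identity gives that the KL divergence between the laws of the full observation history equals $N_k(T)\,\mathrm{D}(P_k,P'_k)$. Applying the data-processing inequality to the event $\{$arm $k$ is pulled fewer than $T/2$ times$\}$ and using $\alpha$-consistency in \emph{both} instances---sub-optimality of $k$ under $P$ makes this event likely, while optimality of $k$ under $P'$ together with $\E'[N_j(T)]\le MT^{\alpha}$ for $j\ne k$ makes it unlikely---yields
\begin{align*}
N_k(T)\;\ge\;\frac{(1-\alpha)\log T-\log(4M)}{\mathrm{D}(P_k,P'_k)}.
\end{align*}
Combining this with the AoI reduction and taking the maximum over $k\in\mathcal{C'}$ gives the claimed bound. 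The only genuinely novel ingredients are the AoI-to-pulls reduction and the competitiveness-driven instance construction; the final information-theoretic step is routine once the confusing instance is in place.
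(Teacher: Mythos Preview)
Your overall architecture matches the paper exactly: reduce AoI regret to expected sub-optimal pulls via the factor $1/\mu^{*}$ (the paper's Lemma~\ref{lemma:aoi_lb_general}), build a confusing instance in which the strictly competitive arm becomes optimal, and close with Bretagnolle--Huber plus the divergence decomposition. The degenerate case and the final information-theoretic step are handled the same way in both.

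The one genuine gap is in your instance construction for general $K$. You assert that the alternative law $P'$ over $X$ can be chosen so that the marginal of \emph{every} arm $j\neq k$ is unchanged. Strict competitiveness $\phi_{k,k^{*}}>\mu^{*}$ is a \emph{pairwise} condition between $k$ and $k^{*}$; it guarantees a redistribution of $X$ that preserves the law of $Y_{k^{*}}(X)$ while pushing $\E[Y_k]$ above $\mu^{*}$, but says nothing about the remaining arms. A simple three-state example already breaks your claim: take $X\in\{x_1,x_2,x_3\}$ and three arms, each equal to $1$ on exactly one state and $0$ elsewhere. Fixing the marginals of the two arms other than $k$ pins down all three probabilities, so no nontrivial $P'$ exists, even though $k$ is strictly competitive whenever $p_1<1/2$. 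If the other marginals are allowed to move, two things fail simultaneously: the divergence decomposition acquires extra terms $\sum_{j\neq k}\E[n_j]\,\mathrm{D}(P_j,P_j')$, so you no longer isolate $\mathrm{D}(P_k,P_k')$, and arm $k$ need not be optimal under $P'$, so $\alpha$-consistency there does not force $n_k$ to be large.

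The paper avoids this by first proving the bound for a \emph{two}-arm instance $\{k^{*},k\}$, where preserving only the law of $Y_{k^{*}}$ is exactly what competitiveness provides (this is the content of Theorem~3 in \cite{gupta2019multiarmed}), and then extending to $K$ arms by the monotonicity heuristic that adding further arms can only increase the exploration burden and hence the lower bound. Your route would be cleaner if it worked, but as stated it needs either a stronger structural assumption or the same reduction to two arms that the paper uses.
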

Here $\mathrm{D}(P_k, P'_k)$ is the KL divergence between the reward distribution of arm $k$ and a suitably chosen perturbed reward distribution. The set $\mathcal{C'}$ is the set of strictly competitive arms and $M$ is an instance dependent constant as in Definition \ref{def:alpha_consistent}.

\subsection{Upper Bound for Correlated Bandit Instances}
We now characterize upper bounds on AoI regret for the policies CUCB and CTS proposed in \cite{gupta2019multiarmed}. The key idea behind these policies is to exploit the correlation in the rewards of various arms to obtain a set of competitive arms in each step. The algorithms then play an arm from this set.
For the sake of completeness, we provide details of these policies in the appendix (Algorithms \ref{algo:CUCB} and \ref{algo:CTS} respectively). 
\begin{theorem}[Upper bound on AoI regret under CUCB]
For a Correlated Bandit instance, let the number of competitive sub-optimal arms be equal to $C$. Further, let $\mathcal{C}$ denote the set of competitive arms inclusive of the optimal arm $k^{*}$. Let $\mu_{\min} = \min_k \mu_k$, 
\begin{align*}
	t_o &= \inf \Big \{ \tau \geq 2: \Delta_{\min}, \tilde{\Delta}_{k, k^{*}} \geq 4\sqrt{\frac{2K \log \tau }{\tau}} \Big \},\\
    U^{(nc)}_{k, \scalebox{.6}{$\mathrm{CUCB}$}} &= Kt_0 + K^{3}\sum_{t = Kt_0}^{T} 2\Big( \frac{t}{K} \Big)^{-2} + \sum_{t = 1}^{T}3t^{-3},\\
    U^{(c)}_{k, \scalebox{.6}{$\mathrm{CUCB}$}} &= 8 \frac{\log (T)}{\Delta_k^{2}} + \Big (1 + \frac{\pi^2}{3}\Big)
    %\\
	%&\qquad \, \,  
	+ \sum_{t=1}^{T} 2 K t \exp \Big( -\frac{t \Delta_{\min}^{2}}{2K} \Big).
\end{align*}
%then for CUCB AoI regret is upper bounded by,
Then, for $T > t_0$,
\begin{align*}
\mathbb{E} [R_{\scalebox{.6}{$\mathrm{CUCB}$}}(T)] &\leq \frac{1 - \mu^{*}}{\mu^* \mu_{\min}} + \Big ( \frac{1}{\mu_{\min}} - \frac{1}{\mu^{*}}\Big ) \times\\
&\, \,  \bigg( \sum_{k^{'} \in [K] \backslash \mathcal{C}} \Delta_{k^{'}} U^{(nc)}_{k, \scalebox{.6}{$\mathrm{CUCB}$}} + \sum_{k \in \mathcal{C} \backslash \{k^{*}\}} \Delta_k U^{(c)}_{k, \scalebox{.6}{$\mathrm{CUCB}$}} \bigg)\\
&= \OO(1) + \OO(C \log T),
\end{align*}
and for $T \leq t_0$,
$
\mathbb{E} [R_{\scalebox{.6}{$\mathrm{CUCB}$}}(T)] \leq \big( \frac{1}{\mu_{\min}} - \frac{1}{\mu^{*}} \big) T.
$
\label{thm:upper_bound_cucb}
\end{theorem}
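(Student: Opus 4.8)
The plan is to separate the argument into an AoI-specific reduction and two per-arm pull-count bounds that exploit the correlation structure. The backbone is a reduction lemma that dominates the AoI regret by the classical cumulative-reward regret of CUCB. Writing $N_k(T)$ for the number of slots up to time $T$ in which arm $k$ is played, I would first establish that for any policy $\rho$,
\begin{align*}
R_\rho(T) \leq \frac{1-\mu^*}{\mu^*\mu_{\min}} + \Big(\frac{1}{\mu_{\min}} - \frac{1}{\mu^*}\Big)\sum_{k \neq k^*}\Delta_k\,\mathbb{E}[N_k(T)],
\end{align*}
noting that $\sum_{k\neq k^*}\Delta_k\,\mathbb{E}[N_k(T)]$ is exactly the standard reward regret $T\mu^*-\mathbb{E}[\sum_t \mu_{k_t}]$, so the bound says the AoI regret is a constant plus a fixed factor times the reward regret. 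The idea is to write $\sum_t a_\rho(t)$ as a sum of triangular numbers over the inter-success intervals of $\rho$ and to compare it, interval by interval, with the renewal process induced by always playing $k^*$, whose long-run average age is $1/\mu^*$; the additive constant absorbs the residual incomplete interval at the horizon.

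With the reduction in hand it suffices to bound $\mathbb{E}[N_k(T)]$ for each sub-optimal arm, and here I would invoke the CUCB analysis of \cite{gupta2019multiarmed}, treating the two arm classes from Definition \ref{def:pseudo} separately. For a non-competitive arm $k'\in[K]\setminus\mathcal{C}$ (so $\tilde{\Delta}_{k',k^*}>0$), the key point is that once the empirical means and pseudo-rewards have concentrated---after an initial exploration phase of length $\sim Kt_0$---the empirical pseudo-reward of $k'$ with respect to the empirically best arm falls below $\mu^*$, so $k'$ is excluded from the empirically competitive set and is played only $\OO(1)$ times; bounding the misclassification probabilities yields $\mathbb{E}[N_{k'}(T)]\leq U^{(nc)}_{k,\scalebox{.6}{$\mathrm{CUCB}$}}$, whose tail sums $\sum_t (t/K)^{-2}$ and $\sum_t t^{-3}$ converge. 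For a competitive sub-optimal arm $k\in\mathcal{C}\setminus\{k^*\}$ the correlation filter does not rule it out, so it is governed by the UCB index inside the competitive set; the usual Chernoff--Hoeffding and union-bound argument produces the leading $8\log T/\Delta_k^2$ term, while the residual sum $\sum_t 2Kt\exp(-t\Delta_{\min}^2/2K)$ accounts for the rare event that $k^*$ itself drops out of the empirically competitive set. This gives $\mathbb{E}[N_k(T)]\leq U^{(c)}_{k,\scalebox{.6}{$\mathrm{CUCB}$}}$.

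I would then combine the three ingredients. Substituting the two pull-count bounds into the reduction lemma reproduces the displayed expression, and reading off the asymptotics---$U^{(nc)}_{k,\scalebox{.6}{$\mathrm{CUCB}$}}=\OO(1)$ (a constant plus convergent tails), $U^{(c)}_{k,\scalebox{.6}{$\mathrm{CUCB}$}}=\OO(\log T)$, and $|\mathcal{C}\setminus\{k^*\}|=C$---yields $\mathbb{E}[R_{\scalebox{.6}{$\mathrm{CUCB}$}}(T)]=\OO(1)+\OO(C\log T)$. For the short-horizon regime $T\leq t_0$ I would argue separately: since every arm succeeds with probability at least $\mu_{\min}$ conditionally on the past, the success process of any policy stochastically dominates a $\mathrm{Bernoulli}(\mu_{\min})$ renewal process, so $\sum_t\mathbb{E}[a_\rho(t)]\leq T/\mu_{\min}$ and hence $R_\rho(T)\leq(\tfrac{1}{\mu_{\min}}-\tfrac{1}{\mu^*})T$.

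I expect the reduction lemma to be the main obstacle. Unlike cumulative-reward regret, AoI carries memory across slots, so a single sub-optimal pull that fails inflates the age not only in its own slot but throughout the remainder of the current inter-success interval; since the cumulative age depends on the \emph{second} moment of the gap lengths, this long-range effect does not collapse to a linear function of the pull counts in an obvious way. The delicate step is to couple the age process of $\rho$ to the optimal renewal process so that this dependence aggregates into the clean per-pull charge $\Delta_k(\tfrac{1}{\mu_{\min}}-\tfrac{1}{\mu^*})$ and to control the boundary terms. The two arm-count bounds, by contrast, are the correlated-bandit analysis of \cite{gupta2019multiarmed} specialized to our gap definitions and are comparatively routine.
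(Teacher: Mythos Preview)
Your three-step architecture---an AoI-to-pull-count reduction, then the per-arm bounds of \cite{gupta2019multiarmed} for non-competitive and competitive arms, then substitution---is exactly the paper's proof. The paper simply invokes Lemma~\ref{lemma:gen_upper_bound} (Lemma~4 of \cite{fatale2020regret}) for the reduction and Theorems~1--2 of \cite{gupta2019multiarmed} (stated as Lemmas~\ref{lemma:non_comp} and \ref{lemma:comp}) for the pull counts, and then substitutes.

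There is one substantive discrepancy worth flagging. The reduction the paper actually uses is
\[
R_\rho(T)\;\leq\;\frac{1-\mu^*}{\mu^*\mu_{\min}}+\Big(\frac{1}{\mu_{\min}}-\frac{1}{\mu^*}\Big)\,\E\Big[\sum_{t=1}^T\mathds{1}_{k_t\neq k^*}\Big],
\]
i.e.\ the multiplier of $(\tfrac{1}{\mu_{\min}}-\tfrac{1}{\mu^*})$ is the \emph{total number of sub-optimal pulls}, not the $\Delta_k$-weighted reward regret $\sum_{k\neq k^*}\Delta_k\,\E[N_k(T)]$ that you propose. Your version is strictly sharper (since $\Delta_k\leq 1$ for Bernoulli rewards), and the paper does not prove it; substituting the paper's Lemma~\ref{lemma:gen_upper_bound} into the pull-count bounds actually reproduces the displayed expression \emph{without} the $\Delta_k$ factors, so their appearance in the theorem statement looks like a slip. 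The upshot is that the ``main obstacle'' you anticipate---coupling the age process so that the charge per sub-optimal pull scales with $\Delta_k$---is not needed: the established reduction from \cite{fatale2020regret} already suffices for the order-wise result, and it is what the paper invokes. Your short-horizon argument via stochastic domination by a $\mathrm{Bernoulli}(\mu_{\min})$ renewal process is a valid alternative derivation of the same $T\leq t_0$ bound.
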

%Here $t_0$ is defined as,
%\begin{align*}
%	t_o &= \inf \Big \{ \tau \geq 2: \Delta_{\min}, \tilde{\Delta}_{k, k^{*}} \geq 4\sqrt{\frac{2K \log \tau }{\tau}} \Big \},
%	\end{align*}
%and $U^{(nc)}_{k, \scalebox{.6}{$\mathrm{CUCB}$}}$, $U^{(c)}_{k, \scalebox{.6}{$\mathrm{CUCB}$}}$ are functions defined as,
%\begin{align*}
%    U^{(nc)}_{k, \scalebox{.6}{$\mathrm{CUCB}$}} &= Kt_0 + K^{3}\sum_{t = Kt_0}^{T} 2\Big( \frac{t}{K} \Big)^{-2} + \sum_{t = 1}^{T}3t^{-3},\\
 %   U^{(c)}_{k, \scalebox{.6}{$\mathrm{CUCB}$}} &= 8 \frac{\log (T)}{\Delta_k^{2}} + \Big (1 + \frac{\pi^2}{3}\Big)\\
%	&\qquad \, \,  + \sum_{t=1}^{T} 2 K t \exp \Big( -\frac{t \Delta_{\min}^{2}}{2K} \Big).
%\end{align*}

\begin{theorem}[Upper bound on AoI regret under CTS]
For a Correlated Bandit instance, let the number of competitive sub-optimal arms be equal to $C$ and let $\mathcal{C}$ denote the set of competitive arms inclusive of the optimal arm $k^{*}$. Further, let $\mu_{\min} = \min_k \mu_k$,
\begin{align*}
	t_b &= \inf \Big \{ \tau \geq \exp{(11 \beta)}: \Delta_{\min}, \tilde{\Delta}_{k, k^{*}} \geq 6\sqrt{\frac{2K \beta \log \tau }{\tau}} \Big \},\\
    U^{(nc)}_{k, \scalebox{.6}{$\mathrm{CTS}$}} &= K t_b + \sum_{t=1}^{T} 3 t^{-3} \\
    %\, \\
	%& \quad \, 
	& \ \ + K^{2}\sum_{t = K t_b}^{T} \bigg( (2K + 3) \Big(\frac{t}{K}\Big)^{-2} + \Big( \frac{t}{K}\Big)^{1 - 2\beta}\bigg),\\
    U^{(c)}_{k, \scalebox{.6}{$\mathrm{CTS}$}} &= 18 \frac{\log (T \Delta_{k}^{2})}{\Delta_k^{2}} + \exp{(11\beta)} + \frac{9}{\Delta_{k}^{2}}\\
	& \ \ + \sum_{t=1}^{T} 2 K t \exp \Big( -\frac{t \Delta_{\min}^{2}}{2K} \Big).
\end{align*}
Then, for any choice of $\beta > 1$ and for $T > t_b$,
\begin{align*}
\mathbb{E} [R_{\scalebox{.6}{$\mathrm{CTS}$}}(T)] &\leq \frac{1 - \mu^{*}}{\mu^* \mu_{\min}} + \Big ( \frac{1}{\mu_{\min}} - \frac{1}{\mu^{*}}\Big ) \times\\
% & \qquad 
\, \, & \bigg( \sum_{k^{'} \in [K] \backslash \mathcal{C}} \Delta_{k^{'}} U^{(nc)}_{k, \scalebox{.6}{$\mathrm{CTS}$}} + \sum_{k \in \mathcal{C} \backslash \{k^{*}\}} \Delta_k U^{(c)}_{k, \scalebox{.6}{$\mathrm{CTS}$}} \bigg)\\
&= \OO(1) + \OO(C \log T),
\end{align*}
and for $T \leq t_b$,
$
\mathbb{E} [R_{\scalebox{.6}{$\mathrm{CTS}$}}(T)] \leq \big( \frac{1}{\mu_{\min}} - \frac{1}{\mu^{*}} \big) T.
$
\label{thm:upper_bound_cts}
\end{theorem}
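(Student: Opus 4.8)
The plan is to prove Theorem~\ref{thm:upper_bound_cts} in two stages: first reduce the AoI regret to a weighted count of sub-optimal arm pulls through a \emph{regret-transfer} argument that is agnostic to the specific bandit policy, and then substitute the expected pull counts that are particular to CTS. The first stage is exactly the reduction already used for CUCB in Theorem~\ref{thm:upper_bound_cucb}, so the constant $\frac{1-\mu^*}{\mu^*\mu_{\min}}$ and the slope $\big(\frac{1}{\mu_{\min}}-\frac{1}{\mu^*}\big)$ will reappear unchanged; only the terms $U^{(c)}_{k,\mathrm{CTS}}$, $U^{(nc)}_{k,\mathrm{CTS}}$ and the burn-in time $t_b$ will differ from their CUCB counterparts. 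Throughout, let $k_t$ be the arm played in slot $t$ and $N_k(T)=\sum_{t}\mathbb{P}(k_t=k)$ its expected number of pulls.

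For the reduction I would start from the one-step recursion $a_\rho(t) = 1 + a_\rho(t-1)\,\mathds{1}\{\text{slot } t-1 \text{ fails}\}$. Conditioning on the history before slot $t-1$, the failure indicator is Bernoulli with parameter $1-\mu_{k_{t-1}}$ and is independent of $a_\rho(t-1)$ given $k_{t-1}$, so
\begin{align*}
\mathbb{E}[a_\rho(t)] = 1 + \mathbb{E}\big[a_\rho(t-1)\,(1-\mu_{k_{t-1}})\big].
\end{align*}
Writing $1-\mu_{k_{t-1}} = (1-\mu^*) + \sum_{k\neq k^*}\Delta_k \mathds{1}\{k_{t-1}=k\}$ splits the drift into an oracle part and a per-arm regret part. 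Subtracting the fixed point $1/\mu^*$ and telescoping the resulting geometric recursion over $t=1,\dots,T$ gives, after collapsing the boundary terms with the crude bound $\mathbb{E}[a_\rho(t)]\le 1/\mu_{\min}$ (an easy induction on the recursion, which also yields the $T\le t_b$ case $R_\rho(T)\le(\frac{1}{\mu_{\min}}-\frac{1}{\mu^*})T$),
\begin{align*}
R_\rho(T) \le \frac{1-\mu^*}{\mu^*\mu_{\min}} + \frac{1}{\mu^*}\sum_{t=1}^{T}\sum_{k\neq k^*}\Delta_k\,\mathbb{E}\big[a_\rho(t)\,\mathds{1}\{k_t=k\}\big].
\end{align*}
Bounding the age factor so as to extract $N_k(T)$ and matching the accounting in the proof of Theorem~\ref{thm:upper_bound_cucb} then converts the double sum into $\big(\frac{1}{\mu_{\min}}-\frac{1}{\mu^*}\big)\sum_{k\neq k^*}\Delta_k\,\mathbb{E}[N_k(T)]$.

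For the second stage I would import, from the Thompson-Sampling analysis of the correlated-bandit framework in \cite{gupta2019multiarmed}, the expected number of pulls of each sub-optimal arm under CTS. The analysis splits after the burn-in time $t_b$ (which is where $\beta>1$ and $\tau\ge\exp(11\beta)$ enter): past $t_b$, with high probability the most-frequently-played arm is $k^*$ and its pseudo-reward estimates are accurate enough that CTS declares every non-competitive arm sub-optimal and confines sampling to the competitive set $\mathcal{C}$. A non-competitive arm is then pulled only $\OO(1)$ times in expectation, its contribution $U^{(nc)}_{k,\mathrm{CTS}}$ consisting of $Kt_b$ plus the polynomial tails $\sum_t 3t^{-3}$ and $K^{2}\sum_t\big((2K+3)(t/K)^{-2}+(t/K)^{1-2\beta}\big)$ that are finite precisely because $\beta>1$; a competitive sub-optimal arm survives the $\mathcal{C}$-restriction and is sampled like a standard Thompson arm, contributing $U^{(c)}_{k,\mathrm{CTS}}$ with its $18\log(T\Delta_k^2)/\Delta_k^2$ term, the constants $\exp(11\beta)+9/\Delta_k^2$, and the summable tail $\sum_t 2Kt\exp(-t\Delta_{\min}^2/2K)$ accounting for the rare event that $k^*$ is not the most-played arm. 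Substituting these counts into the reduction and collecting terms gives the stated bound, of order $\OO(1)+\OO(C\log T)$ since only the $C$ competitive sub-optimal arms carry a $\log T$ term.

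The step I expect to be the main obstacle is controlling the cross term $\mathbb{E}[a_\rho(t)\,\mathds{1}\{k_t=k\}]$ in the reduction: the running age $a_\rho(t)$ and the data-driven event $\{k_t=k\}$ are both functions of the same success/failure history and are therefore correlated, so decoupling them to replace the age by a mean-dependent constant requires a careful conditioning or coupling argument (the same one invoked for CUCB). On the CTS side, the delicate point is transporting the posterior-concentration bounds of \cite{gupta2019multiarmed} into the correlated setting so that the $\beta$-inflated Beta posteriors simultaneously keep competitive sub-optimal arms at $\OO(\log T)$ pulls and render the misidentification tail summable; checking that $t_b$ is finite and that the geometric-type sums in $U^{(nc)}_{k,\mathrm{CTS}}$ converge for $\beta>1$ is the last routine verification.
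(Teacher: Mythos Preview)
Your approach is essentially the paper's: reduce AoI regret to a weighted count of sub-optimal pulls, then import the CTS pull-count bounds from \cite{gupta2019multiarmed}. The paper's own proof is a single sentence---it invokes Lemma~\ref{lemma:gen_upper_bound} (Lemma~4 of \cite{fatale2020regret}) as a black box for the reduction and Lemmas~\ref{lemma:non_comp}--\ref{lemma:comp} for the pull counts---so your sketch actually goes further by attempting to re-derive the reduction; the decoupling of $a_\rho(t)$ and $\mathds{1}\{k_t=k\}$ you flag as the main obstacle is precisely what Lemma~\ref{lemma:gen_upper_bound} packages (under Assumption~1 of \cite{fatale2020regret}), and the paper does not re-open it. One small slip: the pull-count bounds $U^{(c)}_{k,\mathrm{CTS}}$, $U^{(nc)}_{k,\mathrm{CTS}}$ in the theorem are for CTS with \emph{Gaussian} priors (see Lemmas~\ref{lemma:non_comp}--\ref{lemma:comp}), not the Beta posteriors you mention.
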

%Here $t_b$ is defined as,
%\begin{align*}
%	t_b &= \inf \Big \{ \tau \geq \exp{(11 \beta)}: \Delta_{\min}, \tilde{\Delta}_{k, k^{*}} \geq 6\sqrt{\frac{2K \beta \log \tau }{\tau}} \Big \},
%%	\end{align*}
 %   and $U^{(nc)}_{k, \scalebox{.6}{$\mathrm{CTS}$}}$, $U^{(c)}_{k, \scalebox{.6}{$\mathrm{CTS}$}}$ are functions defined as,
%\begin{align*}
 %   U^{(nc)}_{k, \scalebox{.6}{$\mathrm{CTS}$}} &= K t_b + \sum_{t=1}^{T} 3 t^{-3} \, \\
%	& \quad \, + K^{2}\sum_{t = K t_b}^{T} \bigg( (2K + 3) \Big(\frac{t}{K}\Big)^{-2} + \Big( \frac{t}{K}\Big)^{1 - 2\beta}\bigg),\\
 %   U^{(c)}_{k, \scalebox{.6}{$\mathrm{CTS}$}} &= 18 \frac{\log (T \Delta_{k}^{2})}{\Delta_k^{2}} + \exp{(11\beta)} + \frac{9}{\Delta_{k}^{2}}\\
%	& \qquad \, \, + \sum_{t=1}^{T} 2 K t \exp \Big( -\frac{t \Delta_{\min}^{2}}{2K} \Big).
%\end{align*}
%\color{black}

\subsection{Discussion of Implications}
%Implications of the number of competitive arms
From Theorems \ref{thm:upper_bound_cucb} and \ref{thm:upper_bound_cts} it is clear that both CUCB and CTS are $\alpha$-consistent and therefore AoI regret under these policies will satisfy Theorem \ref{thm:lower_bound}. Since, the bounds on AoI regret depend on the number of sub-optimal competitive arms $C$, we consider different possibilities for $C$ to understand the regret bounds. If $C > 0$ for a Correlated Bandit instance, and if at least one arm is strictly competitive, then the lower bound and upper bound on AoI regret are both $\OO(\log T)$. However, when there are no competitive arms, that is when $C=0$, then there is no meaningful lower bound on the expected AoI regret. The $C = 0$ case agrees with the fact that the set $\mathcal{C}\backslash\{k^{*}\}$ being empty results in a constant $\OO(1)$ upper bound on AoI regret. Hence for both these cases of Correlated Bandit instances, the bounds are order-optimal.

%5G physical constraints of geometry and orientation -> line-of-sight requirement for 
%millimeter waves
%The Correlated Bandit model used in this work assumes the knowledge of deterministic reward functions. In practice, it may be challenging to determine these functions, and while conducting such an exercise we may even be able to learn the distribution of the underlying state $X$ itself, thereby doing away with the need for any Bandit algorithm. However, the strength of this model is that once these functions are determined, they can be utilized in other communication systems with a similar physical configuration but a different and unknown distribution of $X$. Occlusions of different nature can repeatedly disrupt multiple channels at the same time engendering correlation in their performance. Communication systems similar to the one considered in this work might be placed in very different environments while having a standardized physical configuration. Due to differences in the environment, occlusion events analogous to the abstract entries in $\{x_1, x_2, \ldots, x_n\}$ used in this work, would occur with different relative frequencies at every location. This variety in environments would make it impossible to scale a solution customized for every environment. The distribution agnostic model and algorithms considered in this work would be highly beneficial in any such scenario. 
The Correlated Bandit model used in this work assumes the knowledge of deterministic reward functions. In practice, it may be challenging to determine these functions, and while conducting such an exercise we may even be able to learn the distribution of the underlying state $X$ itself, thereby doing away with the need for any Bandit algorithm. However, the strength of this model is that once these functions are determined, they can be utilized in other communication systems with a similar configuration but a different and unknown distribution of $X$. Occlusions of different nature can repeatedly disrupt multiple channels at the same time engendering correlation in their performances. Communication systems similar to the one considered in this work might be placed in very different environments while following a standardized configuration. Due to differences in the environment, occlusion events analogous to the abstract entries in $\{x_1, x_2, \ldots, x_n\}$ used in this work, would occur with different relative frequencies at every installation. This variety in environments would make it difficult to scale solutions customized for every location. The distribution agnostic model and algorithms considered in this work would be highly beneficial in such scenarios.
%Consider a situation where the relative physical arrangement of sensor nodes is deterministic (seemed like a reasonably common situation, for instance multiple sensor nodes on a single product). For instance common physical arrangements like a toaster being near an electric stove (or other kitchen appliances) or a drying machine always being near a washing machine etc.
%In such a situation, the success of individual information updates would be correlated based on physical constraints like the fact that proximal nodes would share significant portions of their line of sight paths. The probability of these successes will be dependent on the exact location of such devices within a household and the frequency (or rather distribution) with which moving objects (mostly human beings probably) interfere with their successful operations. This helps emphasize the fact that such disruptions would be governed by different distributions at different places but the reward functions of the various channels would account for these disruptions in the same manner. For instance, among many events, "person blocking line of sight between kitchen and router" could be an event in the abstract sense. Other events could be “closed doors blocking access to the router” etc. Therefore X is an abstract quantity accounting for disruptive events that occur throughout a typical day and the reward functions could either be learned from an experimental setup housing a complete "smart-home" suite or hand designed based on typical (or manufacturer recommended) arrangements.

\section{Simulations}
\begin{figure}[t]
	\centerline{\includegraphics[scale=0.41]{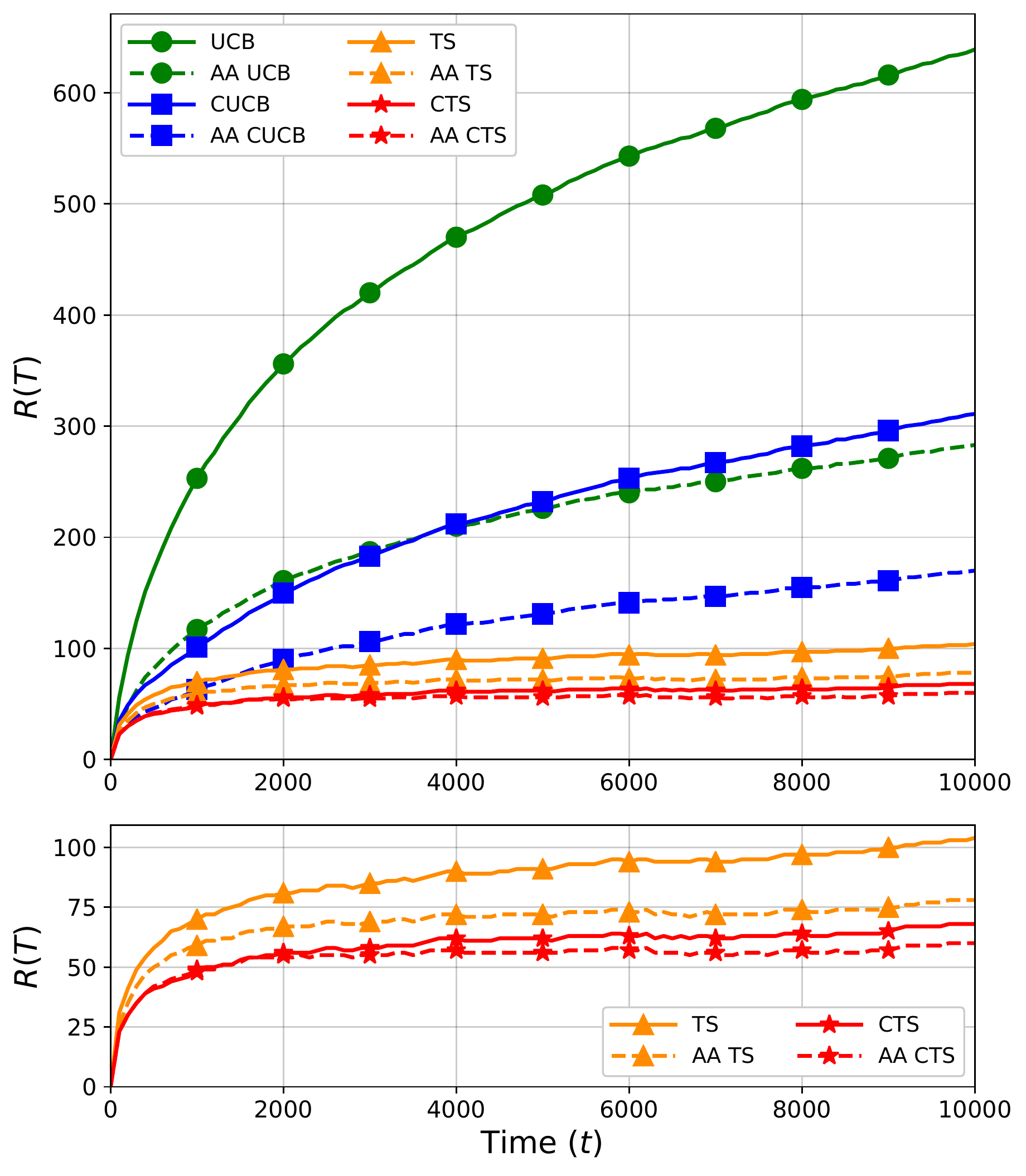}}
	\caption{AoI regret results for bandit instance I-1}
	\label{fig:i1_regret}
\end{figure}
\begin{figure}[t]
	\centerline{\includegraphics[scale=0.41]{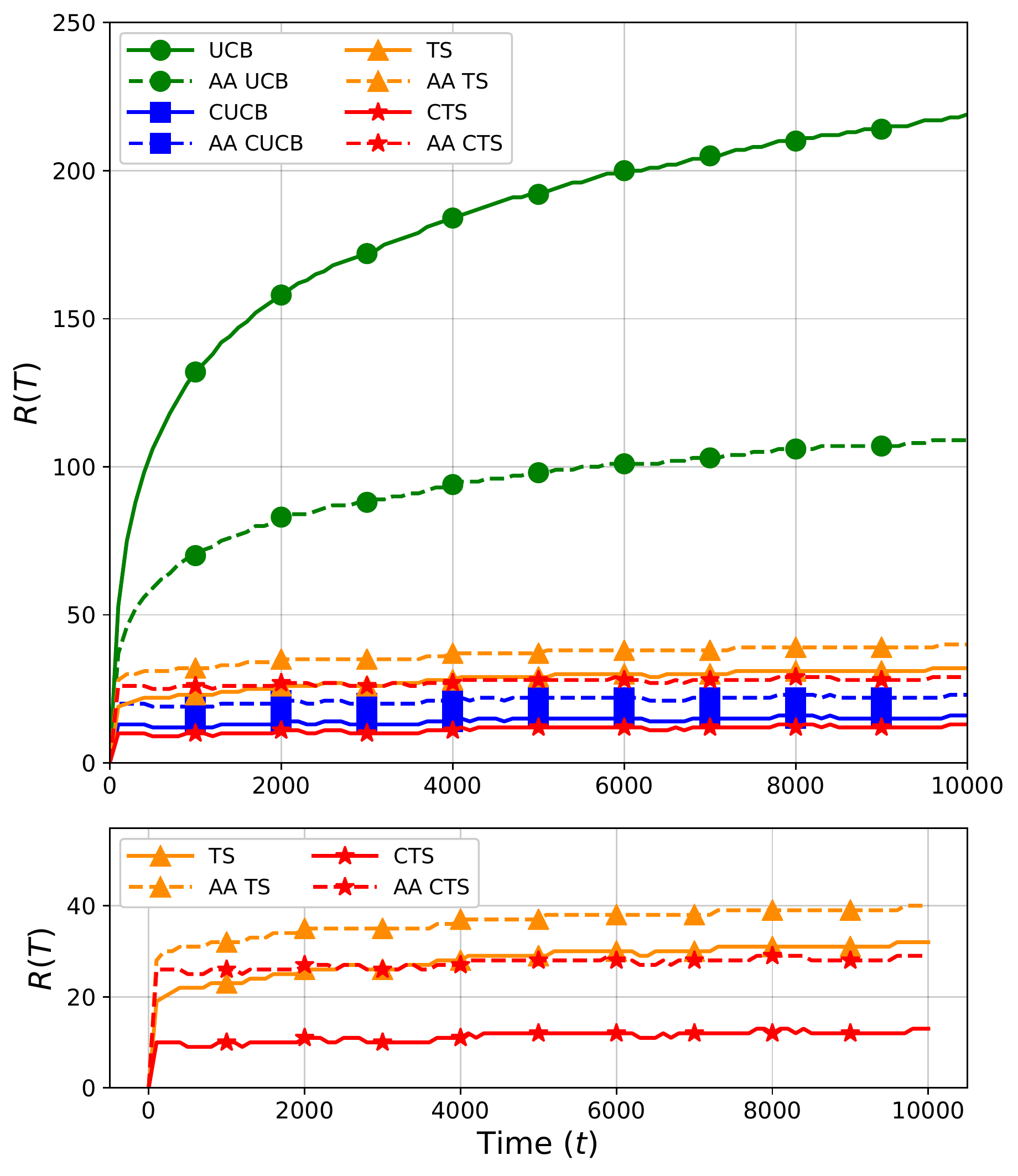}}
	\caption{AoI regret results for bandit instance I-2}
	\label{fig:i2_regret}
\end{figure}

In this section, we compare the performance of UCB, TS, CUCB, and CTS via simulations. In addition, we also compare the performance of \textit{AoI-aware} variants of these policies \cite{fatale2020regret}.

%AoI regret was empirically studied for the policies UCB, Thompson Sampling (TS), CUCB, CTS and their \textit{AoI-aware} variants. 
Whenever current AoI $a(t)$ is low, the AoI-aware variant of a policy makes decisions identical to its parent algorithm. However, when $a(t)$ exceeds a certain threshold specified in \cite{fatale2020regret}, the AoI-aware variant  selects the arm $k$ with the highest empirical mean $\hat{\mu}_k$. 

We plot AoI regret for the bandit instances $\mathrm{I}_1$ and $\mathrm{I}_2$, described in Section \ref{sec:setting} in Figures \ref{fig:i1_regret} and \ref{fig:i2_regret} respectively. It can be observed that for both $\mathrm{I}_1$ and $\mathrm{I}_2$, CTS and its AoI-aware variant perform the best on AoI regret. Further, CUCB and CTS have significantly lower AoI regret as compared to UCB and TS respectively. This is because of the fact that CUCB and CTS exploit correlation between arms to avoid sampling some non-competitive sub-optimal arms numerous times. Even though TS does not utilize correlation, its performance is still better than CUCB for $\mathrm{I}_1$. This is because, even though both UCB and TS have $\log T$ AoI regret scaling, the pre-factor for $\log T$ in UCB, and consequently in CUCB, is much larger than the prefactor in TS.

As seen in Section \ref{sec:setting} instance $\mathrm{I}_2$ had no competitive sub-optimal arms. Hence, as predicted by the bounds in Theorems \ref{thm:upper_bound_cucb} and \ref{thm:upper_bound_cts}, both CUCB and CTS have constant expected AoI regret regardless of the horizon. Interestingly, through Figure \ref{fig:i2_regret}, we also observe that the AoI-aware variant of a policy need not perform better than its parent policy as is the case for CUCB, CTS and TS. The regret bounds stated in Theorem \ref{thm:upper_bound_cts} were for Thompson Sampling with Gaussian priors, however, the simulation results for TS and CTS in this section were with Beta priors since the latter performs better on AoI regret.

\section{Proofs}
\label{sec:proofs}
\subsection{Proof of Theorem \ref{thm:lower_bound}}
\label{subsec:lower_lemmas}
\begin{lemma}[Lower bound on AoI regret for any policy]
	Let $n_k(t)$ be the number of times arm $k$ was scheduled in the time slots $1$ to $t-1$. Then, AoI regret $R_{\rho}(T)$ under any policy $\rho$ is lower bounded as,
	\begin{align*}
	R_{\rho}(T) \geq \frac{1}{\mu^{*}} \sum_{i \neq k^{*}} \Delta_{i} \mathbb{E}[n_i (T + 1)],
	\end{align*}
	where $\Delta_{i} = \mu^{*} - \mu_{i}$.
	\label{lemma:aoi_lb_general}
\end{lemma}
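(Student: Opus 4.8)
The plan is to convert cumulative AoI into a gap-weighted count of sub-optimal pulls by exploiting the one-step AoI recursion. First I would write the recursion implied by the definition: letting $S_t$ denote the indicator that the transmission in slot $t$ succeeds, $a_\rho(t+1) = (1 - S_t)\,a_\rho(t) + 1$. The essential observation is that the arm $k_t$ is chosen from the information available at the \emph{start} of slot $t$, so both $a_\rho(t)$ and $k_t$ are measurable with respect to the pre-transmission $\sigma$-algebra $\mathcal{F}_t$, while $\mathbb{E}[S_t \mid \mathcal{F}_t] = \mu_{k_t}$. Taking the tower expectation therefore gives $\mathbb{E}[a_\rho(t+1)] = \mathbb{E}[a_\rho(t)] - \mathbb{E}[a_\rho(t)\,\mu_{k_t}] + 1$, and this conditioning step is precisely what makes the argument valid for an \emph{arbitrary} (possibly AoI-aware) policy rather than only for index policies.

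Next I would sum this identity over $t = 1, \ldots, T$. The left-hand side telescopes, leaving only the boundary terms, so that
\begin{align*}
\sum_{t=1}^{T}\mathbb{E}[a_\rho(t)\,\mu_{k_t}] = T + \mathbb{E}[a_\rho(1)] - \mathbb{E}[a_\rho(T+1)].
\end{align*}
Writing $\mu_{k_t} = \mu^{*} - \Delta_{k_t}$ and rearranging produces an exact expression for $\sum_t \mathbb{E}[a_\rho(t)]$; subtracting the oracle benchmark $T/\mu^{*}$ from the definition of AoI regret yields
\begin{align*}
R_\rho(T) = \frac{1}{\mu^{*}}\Big( \sum_{t=1}^{T}\mathbb{E}[a_\rho(t)\,\Delta_{k_t}] + \mathbb{E}[a_\rho(1)] - \mathbb{E}[a_\rho(T+1)] \Big).
\end{align*}

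The main term is then handled by the crude but serviceable bound $a_\rho(t) \geq 1$, which gives $\mathbb{E}[a_\rho(t)\,\Delta_{k_t}] \geq \mathbb{E}[\Delta_{k_t}]$ since $\Delta_{k_t} \geq 0$; this is exactly the point where the correlation-driven reduction in AoI is discarded, which is harmless because we only seek a lower bound. Summing over $t$ and recognizing that $\sum_{t=1}^{T}\mathbb{E}[\Delta_{k_t}] = \sum_{i \neq k^{*}} \Delta_i \,\mathbb{E}[n_i(T+1)]$ (using $\Delta_{k^{*}} = 0$ and linearity, since $\sum_t \mathds{1}\{k_t = i\}$ summed over slots $1,\dots,T$ is $n_i(T+1)$) would deliver the claimed inequality.

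The delicate point — and the step I expect to be the main obstacle — is the terminal boundary term $-\mathbb{E}[a_\rho(T+1)]/\mu^{*}$, which is negative and therefore cannot simply be discarded when proving a \emph{lower} bound. I would argue that, together with the favourable initial term and the geometric attenuation $(1-\mu^{*})^{T-s}$ that naturally damps the influence of the final slots, this amounts only to an $\mathcal{O}(1)$ transient correction relative to the $T/\mu^{*}$ benchmark; in particular $\mathbb{E}[a_\rho(T+1)]$ stays bounded for the consistent policies to which the lemma is later applied, so it does not touch the leading $\frac{1}{\mu^{*}}\sum_{i\neq k^{*}} \Delta_i \,\mathbb{E}[n_i(T+1)]$ term that drives the subsequent $\Omega(\log T)$ conclusion. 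Quantifying this transient carefully, rather than the bookkeeping of the telescoping identity, is where the real work lies.
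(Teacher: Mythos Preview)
Your telescoping identity is correct and the conditioning step is exactly right, but the boundary term you flag is a genuine obstruction to proving the lemma \emph{as stated}. Your route yields
\[
R_\rho(T) \;=\; \frac{1}{\mu^{*}}\Big(\sum_{t=1}^{T}\mathbb{E}[a_\rho(t)\,\Delta_{k_t}] + \mathbb{E}[a_\rho(1)] - \mathbb{E}[a_\rho(T+1)]\Big),
\]
and after the bound $a_\rho(t)\ge 1$ the residual $\mathbb{E}[a_\rho(1)]-\mathbb{E}[a_\rho(T+1)]$ is typically \emph{nonpositive} (with the usual convention $a_\rho(1)=1$ and $\mathbb{E}[a_\rho(T+1)]\ge 1$). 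So what you obtain is the lemma's inequality \emph{minus} a nonnegative constant, not the clean inequality itself. Appealing to ``consistent policies'' does not rescue the statement, since the lemma is asserted for \emph{any} policy; and even for consistent policies the correction has the wrong sign and does not vanish.

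The paper avoids this entirely by taking a different route: it does not telescope the AoI recursion at all, but imports from \cite{fatale2020regret} the inequality
\[
R_\rho(T)\;\ge\;\frac{1}{\mu^{*}}\sum_{t=1}^{T}\mathbb{E}\big[S^{*}(t)-S_\rho(t)\big],
\]
which compares success indicators slot by slot and already absorbs the terminal effect. From there the proof is one line: $\mathbb{E}[S^{*}(t)-S_\rho(t)]=\sum_{i\neq k^{*}}\Delta_i\,\mathbb{P}(k_t=i)$ and summing over $t$ gives the result. In other words, the ``real work'' you anticipate is precisely the content of that cited lemma, and it is not reproved here. Your approach would still deliver the downstream $\Omega(\log T)$ conclusion (an additive $\mathcal{O}(1)$ loss is harmless there), but it does not establish the lemma in the sharp form the paper states.
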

\begin{proof}
	Let $S_{\rho}(t)$ and $S^{*}(t)$ be indicator random variables denoting a successful update in time slot $t$ by policy $\rho$ and the oracle's policy respectively. Then, from Lemma 1 of \cite{fatale2020regret} we have,
	\begin{align}
		R_{\rho}(T) \geq \frac{1}{\mu^{*}} \sum_{t=1}^{T} \mathbb{E} [S^{*}(t) - S_{\rho}(t)].
		\label{eq:lb_statement}
	\end{align}
	Further, from \cite{fatale2020regret}, we also have,
	\begin{align}
		\mathbb{E} [S^{*}(t) - S_{\rho}(t)] &\geq \sum_{i \neq k^{*}} (\mu^{*} - \mu_k) \mathbb{P}(\mathds{1}\{ k_t = i \} = 1)\\
		&= \sum_{i \neq k^{*}} \Delta_{i} \mathbb{P} (\mathds{1}\{k_t = i\} = 1).
	\end{align}
	Therefore from inequality (\ref{eq:lb_statement}) we have,
	\begin{align}
	R_{\rho}(T) &\geq \frac{1}{\mu^{*}}\sum_{t=1}^{T} \sum_{i \neq k^{*}} \Delta_i \mathbb{P}(\mathds{1}\{ k_t = i \} = 1)\\
	&= \frac{1}{\mu^{*}} \sum_{i \neq k^{*}} \Delta_{i} \mathbb{E} [n_i (T + 1)].
	\end{align}   
\end{proof}

\begin{lemma}[Bretagnolle-Huber Inequality \cite{DBLP:Tsybakov}]\label{lemma:Bretagnolle_Huber}
	Consider two probability measures $P$ and $Q$, both absolutely continuous with respect to a given measure. Then for any event $\mathcal{A}$,
	\begin{align*}
	P(\mathcal{A}) + Q(\mathcal{A}^{c}) \geq \frac{1}{2} \exp\big( -\mathrm{KL}(P||Q) \big).
	\end{align*}
\end{lemma}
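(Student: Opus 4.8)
The plan is to follow the standard route (as in \cite{DBLP:Tsybakov}) that passes through the Bhattacharyya affinity, avoiding any manipulation of the event $\mathcal{A}$ beyond a single pointwise bound. First I would fix a common dominating measure $\lambda$ (for instance $\lambda = P + Q$) and write $p = dP/d\lambda$, $q = dQ/d\lambda$ for the densities, which exist by the absolute continuity hypothesis. The key opening observation is that the left-hand side admits a lower bound that is independent of the choice of $\mathcal{A}$: since $\int_{\mathcal{A}} p\, d\lambda \geq \int_{\mathcal{A}} \min(p,q)\, d\lambda$ and $\int_{\mathcal{A}^c} q\, d\lambda \geq \int_{\mathcal{A}^c}\min(p,q)\, d\lambda$, adding the two gives
\[
P(\mathcal{A}) + Q(\mathcal{A}^c) \geq \int \min(p,q)\, d\lambda .
\]
This reduces the statement to bounding $\int\min(p,q)\,d\lambda$ from below, with no further reference to $\mathcal{A}$.

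Next I would introduce the affinity $\rho := \int \sqrt{pq}\, d\lambda$ and relate it to $\int\min(p,q)\,d\lambda$ by a single application of Cauchy--Schwarz. Factoring $\sqrt{pq} = \sqrt{\min(p,q)}\cdot\sqrt{\max(p,q)}$ and applying Cauchy--Schwarz yields $\rho^2 \leq \big(\int\min(p,q)\,d\lambda\big)\big(\int\max(p,q)\,d\lambda\big)$. Since $\max(p,q) \leq p + q$, the second factor is at most $2$, so that $\int\min(p,q)\,d\lambda \geq \rho^2/2$. It therefore suffices to show $\rho^2 \geq \exp(-\mathrm{KL}(P\|Q))$.

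The final step is a single Jensen inequality, which I expect to be the only place requiring care about the direction of the bound. Writing $\rho = \mathbb{E}_P[\sqrt{q/p}]$ and using concavity of $\log$, I would obtain $\log\rho \geq \mathbb{E}_P[\log\sqrt{q/p}] = -\tfrac{1}{2}\mathbb{E}_P[\log(p/q)] = -\tfrac{1}{2}\mathrm{KL}(P\|Q)$, hence $\rho^2 \geq \exp(-\mathrm{KL}(P\|Q))$. Chaining this with the previous paragraph gives $P(\mathcal{A}) + Q(\mathcal{A}^c) \geq \rho^2/2 \geq \tfrac{1}{2}\exp(-\mathrm{KL}(P\|Q))$, as claimed. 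The main obstacle is really only the sequence of choices: splitting the mass into $\min$ and $\max$ before invoking Cauchy--Schwarz, and orienting Jensen so that $\log\rho$ is lower bounded by $-\tfrac{1}{2}\mathrm{KL}(P\|Q)$. Each individual estimate is elementary, but the exponential $\exp(-\mathrm{KL})$ scaling survives only if they are combined in exactly this order; a naive Pinsker-type detour through total variation would instead lose this scaling in the regime of large divergence that the lemma is meant to handle.
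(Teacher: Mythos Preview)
Your argument is correct and is exactly the standard proof from \cite{DBLP:Tsybakov}. The paper does not supply its own proof of this lemma; it merely cites \cite{DBLP:Tsybakov} and uses the inequality as a black box in the proof of Theorem~\ref{thm:lower_bound}, so there is nothing further to compare.
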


\begin{lemma}[Divergence Decomposition Lemma \cite{lattimore}]
	Let $\nu = (P_1 , \ldots, P_K)$ be the reward distributions associated with one $K$-armed bandit, and let $\nu' = ( P'_1 , \ldots , P'_K)$ be the reward distributions associated with another $K$-armed bandit. Let $\mathbb{P}_{\nu}^{t} = \mathbb{P}_{\nu \rho}^{t}$ and $\mathbb{P}_{\nu'}^{t} = \mathbb{P}_{\nu' \rho}^{t}$ be the joint distributions corresponding to the schedule of bandit arms chosen under policy $\rho$ and the rewards received. Then,
% \begin{align*}
%     \mathrm{KL}(\PP_{\nu}^{t}||\PP_{\nu'}^{t})
% \end{align*}
\begin{align*}
	\mathrm{KL}(\PP_{\nu}^{t}||\PP_{\nu'}^{t}) = \mathrm{D} (\PP_{\nu}^{t}, \PP_{\nu'}^{t}) = \sum_{i=1}^{K} \mathbb{E}_{\nu} [n_i (t + 1)] \mathrm{D}(P_{i}, P'_{i}),   
\end{align*}
	where $\mathbb{E}_{\nu} [n_i (t + 1)]$ is the expected number of pulls of arm $i$ in $t$ rounds of play for the bandit instance described by $\nu$.
	\label{lemma:divDecomposition}
\end{lemma}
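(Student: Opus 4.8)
The plan is to exploit the chain-rule factorization of the joint law over the observed trajectory and to show that every factor contributed by the policy $\rho$ cancels, leaving only the reward-distribution terms. Write the trajectory through time $t$ as $(k_1, Y_1, \ldots, k_t, Y_t)$, where $k_s$ is the arm scheduled in slot $s$ and $Y_s$ the reward received, and let $H_{s-1} = (k_1, Y_1, \ldots, k_{s-1}, Y_{s-1})$ be the history preceding slot $s$. Since $\rho$ is a non-anticipating (possibly randomized) function of the history alone, the probability it assigns to $k_s$ given $H_{s-1}$ is the same whether the instance is $\nu$ or $\nu'$. Both joint laws therefore factor as a product of identical policy kernels times the reward kernels:
\[
\PP_\nu^t = \prod_{s=1}^{t} \rho(k_s \mid H_{s-1})\, P_{k_s}(Y_s), \qquad \PP_{\nu'}^t = \prod_{s=1}^{t} \rho(k_s \mid H_{s-1})\, P'_{k_s}(Y_s).
\]

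First I would form the log-likelihood ratio and observe that the policy factors cancel term by term, giving
\[
\log \frac{\PP_\nu^t}{\PP_{\nu'}^t} = \sum_{s=1}^{t} \log \frac{P_{k_s}(Y_s)}{P'_{k_s}(Y_s)}.
\]
Taking $\E_\nu$ of both sides yields $\mathrm{KL}(\PP_\nu^t \| \PP_{\nu'}^t)$ on the left. For the right-hand side I would proceed slot by slot with the tower property: conditioned on the history $H_{s-1}$ and the realized arm $k_s$, the reward $Y_s$ is drawn from $P_{k_s}$ under $\nu$, so the inner conditional expectation of $\log(P_{k_s}(Y_s)/P'_{k_s}(Y_s))$ equals precisely $\mathrm{D}(P_{k_s}, P'_{k_s})$. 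This collapses the expression to $\E_\nu\big[\sum_{s=1}^{t} \mathrm{D}(P_{k_s}, P'_{k_s})\big]$.

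Finally I would regroup the per-slot divergences by arm. Writing $\mathrm{D}(P_{k_s}, P'_{k_s}) = \sum_{i=1}^{K} \Indi\{k_s = i\}\, \mathrm{D}(P_i, P'_i)$ and summing over $s$ turns the count $\sum_{s=1}^{t} \Indi\{k_s = i\}$ into $n_i(t+1)$, the number of pulls of arm $i$ through slot $t$. Taking the expectation gives $\sum_{i=1}^{K} \E_\nu[n_i(t+1)]\, \mathrm{D}(P_i, P'_i)$, which is exactly the claimed identity.

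The crux, more a point requiring care than a genuine obstacle, is the cancellation of the policy factors: it rests entirely on $\rho$ being non-anticipating and identical across the two instances, so that $\rho(k_s \mid H_{s-1})$ truly coincides under both $\PP_\nu^t$ and $\PP_{\nu'}^t$. One must also ensure the log-ratios are well defined, i.e.\ $P_i$ is absolutely continuous with respect to $P'_i$ on the relevant support; for the Bernoulli rewards here this holds as soon as the perturbed instance $\nu'$ keeps every success probability strictly inside $(0,1)$.
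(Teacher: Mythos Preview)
Your argument is correct and is exactly the standard proof of the divergence decomposition lemma. Note, however, that the paper does not supply its own proof of this statement: it is quoted as a known result from \cite{lattimore} and used as a black box in the proof of Theorem~\ref{thm:lower_bound}. Your derivation matches the textbook argument in that reference (chain-rule factorization, cancellation of the common policy kernel, tower property, then regrouping by arm), so there is nothing to contrast.
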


\begin{proof}[Proof of Theorem \ref{thm:lower_bound}]
Consider a Correlated Bandit instance $\mathrm{I}$ having just two arms, the optimal arm with index $1$ and a lone sub-optimal arm with index $2$. If the sub-optimal arm is strictly competitive, that is if $\tilde{\Delta}_{2, 1} < 0$, then from Theorem 3 in \cite{gupta2019multiarmed} we can construct a perturbed bandit instance $\mathrm{I'}$ such that $\mathbb{E}_{X'}[\tilde{Y}_{2}(X)] > \mu_{1}$. Let $\mathbb{P}^{t}_{\mathrm{I}}$ and $\mathbb{P}^{t}_{\mathrm{I'}}$ be the distributions corresponding to rewards and scheduled arms in the first $t$ time steps for instances $\mathrm{I}$ and $\mathrm{I'}$ respectively. By construction, arm $2$ will be the optimal arm for the bandit instance $\mathrm{I'}$. Now, using Definition \ref{def:alpha_consistent}, for $\alpha$-consistent policies $\rho$, there exists a constant $M$ such that,
\begin{align}
\mathbb{E}^{t}_{\mathrm{I}} \Big[ \sum_{\tau = 1}^{t} \mathds{1} \{k_{\tau} = 2\} \Big] &\leq M t^{\alpha} \label{eq:markov1}\\
\mathbb{E}^{t}_{\mathrm{I'}} \Big[ \sum_{\tau = 1}^{t} \mathds{1} \{k_{\tau} = 1\} \Big] &\leq M t^{\alpha}. 
%\, \, , \forall \, k' \neq k. 
\label{eq:markov2}
\end{align}
Defining the event $\mathcal{A} = \{n_2(t + 1) > t/2\}$ and using inequalities (\ref{eq:markov1}) and (\ref{eq:markov2}), the following Markov inequalities hold,
\begin{align}
\mathbb{P}^{t}_{\mathrm{I}} (\mathcal{A}) &\leq \frac{2M}{t^{1 - \alpha}}\\
\mathbb{P}^{t}_{\mathrm{I'}} (\mathcal{A}^{c}) &\leq \frac{2M}{t^{1 - \alpha}}.
\end{align}
Now, using Lemma \ref{lemma:Bretagnolle_Huber} we can write,
\begin{align}
\mathrm{D} (\mathbb{P}_{\mathrm{I}}^{t}, \mathbb{P}_{\mathrm{I'}}^{t}) \geq (1 - \alpha)\log{t} - \log{(4M)}.
\label{eq:div_lower_bound}
\end{align}
Next, using Lemma \ref{lemma:divDecomposition} we can expand $\mathrm{D} (\mathbb{P}_{\mathrm{I}}^{t}, \mathbb{P}_{\mathrm{I'}}^{t})$ as,
\begin{align}
\mathrm{D} (\mathbb{P}_{\mathrm{I}}^{t}, \mathbb{P}_{\mathrm{I'}}^{t}) &=  \mathbb{E}_{\mathrm{I}} [n_2(t+1)] \mathrm{D} (P_2, P'_2).
\label{eq:lb_on_pulls}
\end{align}
Combining inequality (\ref{eq:lb_on_pulls}) and Lemma \ref{lemma:aoi_lb_general}, we get the following lower bound on AoI regret for instance $\mathrm{I}$,
\begin{align}
R_{\rho}(T) &\geq \frac{\Delta_2 \Big( (1 - \alpha)\log{T} - \log{(4M)} \Big)}{\mu^{*}{\mathrm{D} (P_2, P'_2)} }. \label{eq:lower_bound_I}
\end{align}
If a Correlated Bandit instance has more than one strictly competitive sub-optimal arm, then the expected number of sub-optimal pulls, and by extension, the lower bound on AoI regret, can only be higher due to the greater exploration required. Hence, in general, if the collection of strictly competitive arms is given by $\mathcal{C'}$, then, 
\begin{align}
R_{\rho}(T) &\geq \max_{k \in \mathcal{C'}} \frac{\Delta_k}{\mathrm{D}(P_k, P'_k)} \frac{ (1 - \alpha)\log{T} - \log{(4M)} }{\mu^{*}}. \label{eq:lower_bound}
\end{align}
If among the sub-optimal arms, there is no strictly competitive arm, then the lower bound on AoI Regret is simply $0$.
\end{proof}

\subsection{Proof of Theorems \ref{thm:upper_bound_cucb} and \ref{thm:upper_bound_cts}}
%Theorem 2 provides an upper bound on the AoI regret when the CUCB and CTS policies are applied to Correlated Bandit instances. 
As in ordinary MAB instances, the reward $Y_k(X)$ of each arm $k$ in Correlated Bandit instances is distributed according to the Bernoulli distribution. Hence, under the technical Assumption 1 of \cite{fatale2020regret} the following Lemma would apply to Correlated Bandit instances.
\begin{lemma}[Lemma 4 in \cite{fatale2020regret}]\label{lemma:gen_upper_bound}
	Given the expected number of pulls of sub-optimal arms, the AoI regret over a run of $T$ rounds is upper bounded by,
	\begin{align*}
	\sum_{t=1}^{T} \mathbb{E} [a(t)] - \frac{T}{\mu^{*}} \leq 
	\frac{1 - \mu^{*}}{\mu^* \mu_{\min}} + \Big ( \frac{1}{\mu_{\min}} - \frac{1}{\mu^{*}}\Big ) \E \Big[\sum_{t=1}^{T} \mathds{1}_{k_t \neq k^{*}}\Big].
	\label{UB_general}
	\end{align*}
	\label{lemma:aoi_upper_bound}
\end{lemma}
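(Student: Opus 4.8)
The plan is to convert the Age-of-Information recursion into a scalar linear recursion for the expected age, and then bound its forcing term by the expected number of sub-optimal pulls. Writing $S(t)$ for the success indicator in slot $t$, the dynamics in the AoI definition read $a(t+1) = (1-S(t))\,a(t) + 1$. Let $\mathcal{F}_t$ denote the history through the choice of $k_t$, so that $a(t)$ and $k_t$ are $\mathcal{F}_t$-measurable and $\E[S(t)\mid\mathcal{F}_t] = \mu_{k_t}$. Taking expectations and writing $1-\mu_{k_t} = (1-\mu^{*}) + \Delta_{k_t}$ with $\Delta_{k_t} = \mu^{*} - \mu_{k_t} \ge 0$, I would obtain $\E[a(t+1)] = (1-\mu^{*})\E[a(t)] + 1 + e(t)$, where $e(t) := \E[\Delta_{k_t}\,a(t)] \ge 0$. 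Centering by the oracle value through $c(t) := \E[a(t)] - 1/\mu^{*}$, the constant $1$ cancels the $(1-\mu^{*})/\mu^{*}$ contribution and this collapses to the clean recursion $c(t+1) = (1-\mu^{*})\,c(t) + e(t)$.

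Next I would telescope this recursion rather than unroll it term by term. Summing $c(t+1) = (1-\mu^{*})c(t) + e(t)$ over $t = 1,\dots,T$ and reindexing gives $\mu^{*}\sum_{t=1}^{T} c(t) = c(1) - c(T+1) + \sum_{t=1}^{T} e(t)$, so that
\[
\sum_{t=1}^{T}\E[a(t)] - \frac{T}{\mu^{*}} = \sum_{t=1}^{T} c(t) = \frac{c(1) - c(T+1)}{\mu^{*}} + \frac{1}{\mu^{*}}\sum_{t=1}^{T} e(t).
\]
The boundary term collapses to the stated additive constant: since $a(T+1) \ge 1$ one has $c(T+1) \ge 1 - 1/\mu^{*}$, and since the policy and oracle start aligned with $\E[a(1)] \le 1/\mu^{*}$ one has $c(1) \le 0$; hence $(c(1) - c(T+1))/\mu^{*} \le (1-\mu^{*})/(\mu^{*})^{2} \le (1-\mu^{*})/(\mu^{*}\mu_{\min})$, matching the constant in the statement after loosening one factor $1/\mu^{*}$ to $1/\mu_{\min}$.

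It then remains to control $\tfrac{1}{\mu^{*}}\sum_t e(t)$. Because $\Delta_{k_t} = 0$ when $k_t = k^{*}$ and $\Delta_{k_t} \le \mu^{*} - \mu_{\min}$ otherwise, I would bound $e(t) \le (\mu^{*} - \mu_{\min})\,\E[a(t)\,\mathds{1}\{k_t \ne k^{*}\}]$. The hard part is precisely this cross term $\E[a(t)\,\mathds{1}\{k_t \ne k^{*}\}]$: the age $a(t)$ and the event of pulling a sub-optimal arm are both functions of the same history and may be correlated, most delicately for the AoI-aware variants whose decisions depend on $a(t)$ directly. A crude coupling gives only the marginal estimate $\E[a(t)] \le 1/\mu_{\min}$ (each slot succeeds with probability at least $\mu_{\min}$, so the worst-case age recursion has fixed point $1/\mu_{\min}$), which does not by itself control the conditional quantity. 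To promote this to $\E[a(t)\,\mathds{1}\{k_t \ne k^{*}\}] \le \tfrac{1}{\mu_{\min}}\,\PP(k_t \ne k^{*})$ I would invoke the technical Assumption 1 of \cite{fatale2020regret}, whose role is exactly to decouple the age at a sub-optimal pull from the suboptimality event.

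Substituting this bound yields $\tfrac{1}{\mu^{*}}\sum_t e(t) \le \tfrac{\mu^{*}-\mu_{\min}}{\mu^{*}\mu_{\min}}\,\E[\sum_t \mathds{1}\{k_t \ne k^{*}\}] = \big(\tfrac{1}{\mu_{\min}} - \tfrac{1}{\mu^{*}}\big)\,\E[\sum_t \mathds{1}\{k_t \ne k^{*}\}]$, which, combined with the constant boundary term, establishes the claim. I expect the entire difficulty to be concentrated in that single conditional-age estimate; the rest is the deterministic telescoping above, which is exact and requires no approximation.
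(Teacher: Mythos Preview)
The paper does not prove this lemma: it is quoted verbatim as ``Lemma~4 in \cite{fatale2020regret}'' and invoked under ``the technical Assumption~1 of \cite{fatale2020regret}'', with no argument given beyond the remark that Bernoulli rewards in the correlated model satisfy the hypotheses of that reference. So there is no in-paper proof to compare against.

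Your reconstruction is internally sound. The recursion $c(t+1)=(1-\mu^{*})c(t)+e(t)$ and its telescoped sum $\mu^{*}\sum_t c(t)=c(1)-c(T+1)+\sum_t e(t)$ are exact, and your boundary estimate $(c(1)-c(T+1))/\mu^{*}\le (1-\mu^{*})/(\mu^{*})^2\le (1-\mu^{*})/(\mu^{*}\mu_{\min})$ is correct. You also correctly isolate the only nontrivial step, namely the conditional bound $\E[a(t)\,\mathds{1}\{k_t\neq k^{*}\}]\le \mu_{\min}^{-1}\,\PP(k_t\neq k^{*})$, and you invoke precisely the same external assumption the paper invokes for it. That is the right place to put the assumption: without a decoupling of the age from the sub-optimality event, the inequality need not hold (and the unconditional bound $\E[a(t)]\le 1/\mu_{\min}$ does not imply the conditional one). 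Since the paper defers this step to \cite{fatale2020regret}, your identification of it as the load-bearing hypothesis is exactly aligned with the paper's treatment.
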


\begin{lemma}[Expected number of pulls of a non-competitive arm, Theorem 1 in \cite{gupta2019multiarmed}]
	The expected number of pulls of a non-competitive sub-optimal arm under CUCB is upper bounded by,
	\begin{align*}
	\mathbb{E} [n_{k}^{(nc)}(T)]  &\leq Kt_0 + K^{3}\sum_{t = Kt_0}^{T} 2\Big( \frac{t}{K} \Big)^{-2} + \sum_{t = 1}^{T}3t^{-3}\\
	&= U^{(nc)}_{k, \scalebox{.6}{$\mathrm{CUCB}$}} = O(1),
	%\end{align*}
	\intertext{and under CTS with Gaussian priors by,}
	%\begin{align*}
	\begin{split}
	\mathbb{E} [n_{k}^{(nc)}(T)] &\leq K t_b + \sum_{t=1}^{T} 3 t^{-3} \, \\
	& \quad \, + K^{2}\sum_{t = K t_b}^{T} \bigg( (2K + 3) \Big(\frac{t}{K}\Big)^{-2} + \Big( \frac{t}{K}\Big)^{1 - 2\beta}\bigg)
	\end{split}\\
	&= U^{(nc)}_{k, \scalebox{.6}{$\mathrm{CTS}$}} = O(1),
	\end{align*}
	where $t_0, t_b > 0$ are constants defined as,
	\begin{align*}
	t_o &= \inf \Big \{ \tau \geq 2: \Delta_{\min}, \tilde{\Delta}_{k, k^{*}} \geq 4\sqrt{\frac{2K \log \tau }{\tau}} \Big \},\\
	t_b &= \inf \Big \{ \tau \geq \exp{(11 \beta)}: \Delta_{\min}, \tilde{\Delta}_{k, k^{*}} \geq 6\sqrt{\frac{2K \beta \log \tau }{\tau}} \Big \},    
	\end{align*}
	where $\beta > 1$ is a parameter in CTS (Gaussian priors). 
	\label{lemma:non_comp}
\end{lemma}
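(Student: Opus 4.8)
The plan is to follow the analysis of the Correlated-MAB framework from \cite{gupta2019multiarmed} and bound the pulls of a non-competitive arm $k$ by splitting on whether the optimal arm has already accumulated a linear number of samples. Write $n_{k^*}(t)$ for the number of pulls of $k^*$ before round $t$ and $k^{\mathrm{emp}}(t) = \argmax_j \hat\mu_j(t)$ for the empirically best arm; recall that under both CUCB and CTS an arm $\ell$ is eligible to be played at round $t$ only if it lies in the empirically competitive set, i.e.\ only if $\hat\phi_{\ell, k^{\mathrm{emp}}(t)}(t) \ge \hat\mu_{k^{\mathrm{emp}}(t)}(t)$. Starting from $\E[n_k^{(nc)}(T)] = \sum_{t=1}^T \PP(k_t = k)$, I would use the decomposition
\begin{align*}
\PP(k_t = k) \le \PP\big(k_t = k,\ n_{k^*}(t) \ge t/K\big) + \PP\big(n_{k^*}(t) < t/K\big).
\end{align*}

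\textbf{First term (pseudo-reward concentration).} When $n_{k^*}(t) \ge t/K$, the optimal arm carries $\Omega(t/K)$ samples, so both $\hat\mu_{k^*}(t)$ and the empirical pseudo-reward $\hat\phi_{k,k^*}(t)$ concentrate. Since $k$ is non-competitive we have $\phi_{k,k^*} = \mu^* - \tilde\Delta_{k,k^*}$ with $\tilde\Delta_{k,k^*} > 0$, and $k^{\mathrm{emp}}(t) = k^*$ holds with high probability once $k^*$ is this well sampled. Arm $k$ can then be declared competitive only if $\hat\phi_{k,k^*}(t) \ge \hat\mu_{k^*}(t)$, which forces a deviation of at least $\tilde\Delta_{k,k^*}/2$ in one of the two estimates. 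Applying Hoeffding's inequality to the pseudo-rewards (legitimate since $s_{k,k^*} \in [0,1]$) and summing the tail over $t$ produces the $\sum_{t=1}^T 3t^{-3}$ contribution; the threshold $t_0$ (resp.\ $t_b$) is exactly the round beyond which $t/K$ samples suffice to make this deviation improbable.

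\textbf{Second term (sufficient sampling of the optimal arm).} The heart of the proof is the sub-result that, for $t \ge K t_0$, the optimal arm has been pulled at least $t/K$ times with high probability. The intuition is that $k^*$ is always competitive with respect to itself ($\phi_{k^*,k^*} = \mu^*$), so whenever $k^{\mathrm{emp}}(t) = k^*$ the selection rule keeps $k^*$ in contention; the only arms that can repeatedly displace it are the at most $K$ competitive arms, and the UCB (resp.\ Thompson) index of each concentrates after $\BigO(\log t)$ pulls. Bounding $\PP(n_{k^*}(t) < t/K)$ this way yields the $K t_0 + K^3 \sum_{t = K t_0}^T 2(t/K)^{-2}$ expression for CUCB, and the analogous $K t_b + K^2 \sum_{t = K t_b}^T \big((2K+3)(t/K)^{-2} + (t/K)^{1-2\beta}\big)$ expression for CTS, where the extra $(t/K)^{1-2\beta}$ summand and the threshold $\exp(11\beta)$ arise from the concentration of the Gaussian Thompson samples and require the parameter $\beta > 1$.

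The step I expect to be the main obstacle is this second term: establishing that the optimal arm receives an $\Omega(t/K)$ share of the pulls. The difficulty is that the argument is self-referential --- the optimal arm is sampled enough precisely because non-competitive arms have been eliminated, yet eliminating them (the first term) already presupposes that $k^*$ is well sampled. I would break this circularity exactly as in \cite{gupta2019multiarmed}: decompose the event $\{n_{k^*}(t) < t/K\}$ over the identity of $k^{\mathrm{emp}}$, control the rounds with $k^{\mathrm{emp}} \ne k^*$ by a union bound over the $K$ arms together with index concentration, and feed the resulting high-probability lower bound on $n_{k^*}(t)$ back into the Hoeffding step of the first term. Finally I would invoke the technical Assumption~1 of \cite{fatale2020regret} to ensure the Bernoulli pseudo-reward structure makes these concentration bounds applicable in the AoI setting.
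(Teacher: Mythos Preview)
The paper does not prove this lemma at all: it is quoted verbatim as Theorem~1 of \cite{gupta2019multiarmed} and simply invoked when assembling the proof of Theorems~\ref{thm:upper_bound_cucb} and~\ref{thm:upper_bound_cts}. So there is nothing in the paper to compare your argument against beyond the citation itself.

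That said, your sketch is a faithful outline of the argument in \cite{gupta2019multiarmed}: the split on $\{n_{k^*}(t) \ge t/K\}$, the Hoeffding-based control of the pseudo-reward deviation on the first branch, and the ``sufficient sampling of $k^*$'' analysis on the second branch are exactly the ingredients used there. Two small corrections are worth making. First, the empirically competitive test in the algorithm is $\min_{\ell \in \mathcal{S}_t} \hat\phi_{k,\ell}(t) \ge \hat\mu_{k^{\mathrm{emp}}(t)}(t)$, with the minimum taken over \emph{all} significantly sampled arms $\mathcal{S}_t$, not only over $k^{\mathrm{emp}}(t)$; this does not change the logic of your first term (once $k^* \in \mathcal{S}_t$ the bound via $\hat\phi_{k,k^*}$ still applies), but your description of the eligibility rule is slightly off. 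Second, the appeal to Assumption~1 of \cite{fatale2020regret} is misplaced here: that assumption is used only to obtain the AoI-regret upper bound in Lemma~\ref{lemma:gen_upper_bound}, whereas the pull-count bound you are proving is a pure bandit statement about CUCB/CTS and requires no AoI-specific hypothesis.
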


\begin{lemma}[Expected number of pulls of a competitive arm, Theorem 2 in \cite{gupta2019multiarmed}]
	The expected number of pulls of a competitive sub-optimal arm under CUCB is upper bounded by, 
	\begin{align*}
	\begin{split}
	\mathbb{E} [n_{k}^{(c)}(T)] &\leq 8 \frac{\log (T)}{\Delta_k^{2}} + \Big (1 + \frac{\pi^2}{3}\Big)\\
	&\qquad \, \,  + \sum_{t=1}^{T} 2 K t \exp \Big( -\frac{t \Delta_{\min}^{2}}{2K} \Big)
	\end{split}\\
	&= U^{(c)}_{k, \scalebox{.6}{$\mathrm{CUCB}$}} =  O(\log T),
	%\end{align*}
	\intertext{and under CTS with Gaussian priors by,}
	%\begin{align*}
	\begin{split}
	\mathbb{E} [n_{k}^{(c)}(T)] &\leq 18 \frac{\log (T \Delta_{k}^{2})}{\Delta_k^{2}} + \exp{(11\beta)} + \frac{9}{\Delta_{k}^{2}}\\
	& \qquad \, \, + \sum_{t=1}^{T} 2 K t \exp \Big( -\frac{t \Delta_{\min}^{2}}{2K} \Big)
	\end{split}\\
	&= U^{(c)}_{k, \scalebox{.6}{$\mathrm{CTS}$}} = O(\log T),
	\end{align*}
	where $\beta > 1$ is a parameter in CTS (Gaussian priors).
	\label{lemma:comp}
\end{lemma}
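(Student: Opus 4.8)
The plan is to bound the expected pull count directly through $\E[n_k^{(c)}(T)] = \sum_{t=1}^{T} \PP(k_t = k)$ and to split each round according to whether the optimal arm has already been sampled a constant fraction of the time. Concretely, I would introduce the event $\mathcal{E}_t = \{n_{k^*}(t) < t/K\}$ and write $\PP(k_t = k) \leq \PP(k_t = k, \mathcal{E}_t^c) + \PP(\mathcal{E}_t)$, so that the two stated groups of terms arise from controlling the two pieces separately. The threshold $t_0$ (resp. $t_b$) is precisely the round beyond which the confidence width $\sqrt{2K\log\tau/\tau}$ (resp. its $\beta$-scaled Gaussian analogue) has shrunk below both $\Delta_{\min}$ and $\tilde{\Delta}_{k,k^*}$, which is exactly the regime where the clean arguments below apply; for $t$ below this threshold every round is simply charged as a pull, contributing the additive $t_0$-type constants.

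For the tail term $\sum_t \PP(\mathcal{E}_t)$ I would establish a concentration lemma showing that the optimal arm is under-sampled only with exponentially small probability. If $k^*$ had been pulled fewer than $t/K$ times, then on some round a sub-optimal arm must have presented an index (UCB value or posterior sample) at least as large as that of $k^*$, which forces an empirical mean to deviate from its true value by at least $\Delta_{\min}/2$. A two-sided Chernoff--Hoeffding bound with $n \approx t/K$ samples and deviation $\Delta_{\min}/2$ yields $\exp(-t\Delta_{\min}^2/2K)$, and a union bound over the $K$ arms and the $t$ possible sample counts produces the factor $2Kt$; summing reproduces $\sum_{t=1}^{T} 2Kt\exp(-t\Delta_{\min}^2/2K)$, which is common to both the CUCB and CTS bounds.

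For the term $\sum_t \PP(k_t = k, \mathcal{E}_t^c)$, conditioning on $n_{k^*}(t) \geq t/K$ guarantees that $\hat{\mu}_{k^*}$ and the empirical pseudo-rewards $\hat{\phi}_{\cdot,k^*}$ are built from $\Omega(t)$ samples and therefore concentrate. In this regime the correlation test correctly discards every non-competitive arm, so selection among the surviving empirically competitive arms reduces to an ordinary index policy on a subset guaranteed to contain $k^*$. For CUCB the standard optimistic argument then applies: once $n_k(t) \geq 8\log T/\Delta_k^2$ the inflated index of $k$ falls below $\mu^*$ with high probability, and the residual probabilities of the bad one-sided concentration events sum over all rounds to the constant $1 + \pi^2/3$, giving $8\log(T)/\Delta_k^2 + (1 + \pi^2/3)$. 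For CTS I would instead invoke the Gaussian-posterior regret decomposition, bounding the probability that $k$'s posterior sample beats that of $k^*$ through posterior concentration together with the Gaussian anti-concentration inequality, which produces the $18\log(T\Delta_k^2)/\Delta_k^2 + \exp(11\beta) + 9/\Delta_k^2$ group.

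The main obstacle I anticipate is the coupling between the correlation-based filtering and the index-based selection: the index policy is effectively run on a \emph{random, time-varying} subset of arms, so the classical UCB and TS pull bounds cannot be quoted verbatim. Making this rigorous requires showing that, on the high-probability event $\mathcal{E}_t^c$, the competitive set behaves like a fixed arm set containing $k^*$, so that the comparison arm in the optimism (or posterior-sampling) argument is always available. Once that is in place, the competitive arm $k$ — for which the pseudo-gap $\tilde{\Delta}_{k,k^*} \leq 0$ gives no filtering leverage, since $\hat{\phi}_{k,k^*}$ concentrates above $\hat{\mu}_{k^*}$ and keeps $k$ in the set — is bounded purely through its sub-optimality gap $\Delta_k$, exactly as in the uncorrelated setting. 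Summing the three contributions gives $U^{(c)}_{k,\mathrm{CUCB}} = \OO(\log T)$ and $U^{(c)}_{k,\mathrm{CTS}} = \OO(\log T)$.
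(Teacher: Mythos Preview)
The paper does not prove this lemma at all: it is stated verbatim as Theorem~2 of \cite{gupta2019multiarmed} and used as a black box, with the only ``proof'' in the paper being the attribution in the lemma header. Consequently there is no in-paper argument to compare your proposal against.

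That said, your sketch is the standard decomposition used in \cite{gupta2019multiarmed} and is essentially correct: split on the event $\{n_{k^*}(t) < t/K\}$, control its probability by a Hoeffding-plus-union-bound giving the $\sum_t 2Kt\exp(-t\Delta_{\min}^2/2K)$ term, and on the complement reduce the selection step to a vanilla UCB/TS comparison against $k^*$, yielding the $8\log T/\Delta_k^2 + (1+\pi^2/3)$ and $18\log(T\Delta_k^2)/\Delta_k^2 + \exp(11\beta) + 9/\Delta_k^2$ groups respectively. One small slip: the thresholds $t_0$ and $t_b$ you invoke in the first paragraph belong to the \emph{non-competitive} analysis (Lemma~\ref{lemma:non_comp}), not to this lemma; the competitive bound for CUCB contains no warm-up constant of that kind, and the $(1+\pi^2/3)$ arises purely from summing the residual UCB bad-event probabilities, not from charging early rounds. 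Dropping that sentence (or clarifying that only the $\exp(11\beta)$ term in the CTS case plays a threshold role) would make the outline match the target bounds exactly.
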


\begin{proof}[Proof of Theorems \ref{thm:upper_bound_cucb} and \ref{thm:upper_bound_cts}]
	The results follow by substituting the appropriate expression for the total number of sub-optimal pulls from Lemmas \ref{lemma:non_comp} and \ref{lemma:comp} into Lemma \ref{lemma:gen_upper_bound}.
\end{proof}

\section{Conclusions and Future Work}
%With next generation communication technologies relying on higher carrier frequencies, the problem of line-of-sight occlusion becomes more significant. In systems of the kind considered in this work, physical configuration constraints can cause occlusions to effect multiple channels simultaneously resulting in their performances being correlated. This correlation can be exploited to identify certain channels as sub-optimal within a few time steps. We have shown theoretical bounds on AoI regret to corroborate this fact. Moreover, through simulations we observe that policies capable of exploiting underlying correlation perform significantly better than those that disregard the presence of correlation.
%
%The Correlated Bandit framework used in this work does not put constraints on the nature of rewards received. It would likely be beneficial to exploit the fact that in our system, rewards are always either $0$ or $1$. Additionally, the theoretical upper bound on AoI regret for CTS was for \textit{Gaussian} priors. An analogous result remains to be shown for CTS with \textit{Beta} priors. These topics will be the focus of future work.
With next generation communication technologies relying on higher carrier frequencies, the problem of line-of-sight occlusion becomes more significant. In systems of the kind considered in this work, occlusions can effect multiple channels simultaneously resulting in their performances being correlated. This correlation can be exploited to identify certain channels as sub-optimal within a few time steps. We have shown theoretical bounds on AoI regret to corroborate this fact. Moreover, through simulations we observed that policies capable of exploiting correlation perform significantly better than those that disregard the presence of correlation.

The Correlated Bandit framework used in this work does not put constraints on the nature of rewards received. It would likely be beneficial to exploit the fact that in our system, rewards are always either $0$ or $1$. Additionally, the theoretical upper bound on AoI regret for CTS was for \textit{Gaussian} priors. An analogous result remains to be shown for CTS with \textit{Beta} priors. These topics will be the focus of future work.
% Print biblography

\bibliographystyle{IEEEtran}
\bibliography{IEEEabrv,ref}

% Start appendix from a separate page
\newpage
\appendix
\begin{algorithm}[ht]
	\DontPrintSemicolon
	\textbf{Input:} Pseudo-rewards $s_{\ell, k}(r)$ \;
	\textbf{Initialize:} Set $\hat{\mu}_k$, $\hat{\phi}_{\ell, k}$ and $n_k$ as $0$ $\forall$ $k\in[K]$. \;
	\While{$1\leq t \leq K$}{
		Schedule update on Channel $k_t=t$ \;
		Receive reward $r_t$ drawn from $\text{Ber}(\mu_{k_t})$ \;
		$\hat{\mu}_{k_t} = r_t$\;
		$n_{k_t}(t) = 1$ \;
		$t = t+1$}
	\While{$ t \geq K + 1 $}{
		Find $\mathcal{S}_t = \{k: n_k(t-1) \geq \frac{t-1}{K} \}$, the set of arms pulled a significant number of times till $t-1$. Define $k^{\mathrm{emp}}(t) = \arg \max_{k \in \mathcal{S}_t} \hat{\mu}_{k_t}$ \;
		Initialize the empirically competitive set $\mathcal{A}_t$ as $\{\}$ \;
		\For{$k \in [K]$}{
			\If{$\min_{\ell \in S_t} \hat{\phi}_{k, \ell}(t) \geq \hat{\mu}_{k^{\mathrm{emp}}}(t)$}{Add empirically competitive arms $k$ to the set: $\mathcal{A}_t = \mathcal{A}_t \cup \{ k \}$}
		} 
		Schedule update on Channel $k_t$ such that, $ k_t=\arg \max_{k\in \mathcal{A}_t \cup \{ k^{\mathrm{emp}}(t)\}} \, \hat{\mu}_{k_t}+\sqrt{\frac{2\log t}{n_k(t-1)}} $ \;
		Receive reward $r_t$ drawn from $\text{Ber}(\mu_{k_t})$ \;
		$\hat{\mu}_{k_t}=(\hat{\mu}_{k_t}\cdot n_{k_t}(t-1) + r_t)/(n_{k_t}(t-1) + 1)$ \;
		$n_{k_t}(t) = n_{k_t}(t-1) + 1$ \;
		$\hat{\phi}_{k, k_t}= \sum_{\tau:k_{\tau}=k_t} s_{k, k_{\tau}} (r_{\tau}) / n_{k_t}(t) \, \, \forall \, k \neq k_t$ \;
		$t = t + 1$}
	\caption{\sc Correlated UCB (CUCB)}
	\label{algo:CUCB}
\end{algorithm}
\begin{algorithm}[ht]
	\DontPrintSemicolon
	\textbf{Input:} Pseudo-rewards $s_{\ell, k}(r)$ \;
	\textbf{Initialize:} Set the number of successes $S_k(t)$, failures $F_k(t)$, and the quantities in CUCB as $0$ $\forall$ $k\in[K]$. \;
	\While{$ t \geq 1 $}{
		Perform the steps 10 - 14 as in Algorithm \ref{algo:CUCB} \;
		For each $k$ in $[K]$, draw a sample $\theta_k(t)$, where,
		$ \theta_k(t) \sim \text{Beta}(S_k(t-1) + 1, F_k(t-1) + 1) $ \;
		Schedule update on Channel $k_t$ such that 
		$ k_t = \arg \max_{k\in \mathcal{A}_t \cup \{ k^{\text{emp}}(t)\}} \theta_k(t)$ \;
		Receive reward $r_t$ drawn from $\text{Ber}(\mu_{k_t})$ \;
		$S_{k_t}(t) = S_{k_t}(t-1) + r_t$ \;
		$F_{k_t}(t) = F_{k_t}(t-1) + (1 - r_t)$ \;
		Perform the steps 17 - 20 as in Algorithm \ref{algo:CUCB} \;
%		$\hat{\mu}_{k_t}=(\hat{\mu}_{k_t}\cdot n_{k_t}(t-1) + r_t)/(n_{k_t}(t-1) + 1)$ \;
%		$n_{k_t}(t) = n_{k_t}(t-1) + 1$ \;
%		$\hat{\phi}_{k, k_t}= \sum_{\tau:k_{\tau}=k_t} s_{k, k_{\tau}} (r_{\tau}) / n_{k_t}(t) \, \forall \, k \neq k_t$ \;
%		$t = t + 1$
	}
	\caption{\sc Correlated TS (CTS)}
	\label{algo:CTS}
\end{algorithm}

\end{document}